\documentclass[12pt]{article}

\usepackage[utf8]{inputenc}
\usepackage[margin=1in]{geometry}
\usepackage{setspace}
\usepackage{amsmath,amssymb,amsthm}
\usepackage{graphicx}
\graphicspath{{figures/}}
\usepackage{booktabs}
\usepackage{multirow}
\usepackage{longtable}
\usepackage{caption}
\usepackage{subcaption}

\usepackage{natbib}
\usepackage[colorlinks=true, citecolor=blue]{hyperref}
\usepackage{rotating}
\usepackage{lscape}
\usepackage{array}
\usepackage{threeparttable}
\usepackage{dcolumn}

\usepackage{float}

\newcolumntype{d}[1]{D{.}{.}{#1}}

\newtheorem{proposition}{Proposition}

\newtheorem{lemma}{Lemma}

\begin{document}

\title{\textbf{The Virtue of Sparsity in Complexity}}

\author{
Nima Afsharhajari, Jonathan Yu-Meng Li
\\
\small Telfer School of Management\\
\small University of Ottawa\\
\small \texttt{nafsh100@uottawa.ca, jonathan.li@telfer.uottawa.ca}
}

\date{This Version: April 16, 2026\\[0.1cm]\small}

\maketitle
\thispagestyle{empty}

\begin{abstract}
Sparsity or complexity? In modern high-dimensional asset pricing, these are often viewed as competing principles: richer feature spaces appear to favor complexity, while economic intuition has long favored parsimony. We show that this tension is misplaced. We distinguish \emph{capacity sparsity}—the dimensionality of the candidate feature space—from \emph{factor sparsity}—the parsimonious structure of priced risks—and argue that the two are complements: expanding capacity enables the discovery of factor sparsity. Revisiting the benchmark empirical design of \citet{DidisheimKeKellyMalamud2025a} and pushing it to higher complexity regimes, we show that nonlinear feature expansions combined with basis pursuit yield portfolios whose out-of-sample performance dominates ridgeless benchmarks beyond a critical complexity threshold. The evidence shows that the gains from complexity arise not from retaining more factors, but from enlarging the space from which a sparse structure of priced risks can be identified. The virtue of complexity in asset pricing operates through factor sparsity.


\noindent \textbf{Keywords:} Asset pricing, stochastic discount factor, factor models, overparameterization, basis pursuit, machine learning
\end{abstract}

\newpage
\setcounter{page}{1}

\section{Introduction}

Recent work \citep{KellyMalamudZhou2024a, DidisheimKeKellyMalamud2025a, KellyMalamud2025a} highlights a \emph{virtue of complexity} in empirical asset pricing: large nonlinear factor sets, when combined with shrinkage, can outperform traditional parsimonious models. As high-dimensional methods become central to machine-learning-based SDF estimation, sparsity---long a guiding principle in empirical asset pricing and econometrics---appears increasingly marginalized. The prevailing message is that richer models dominate simpler ones.

\cite{DidisheimKeKellyMalamud2025a} illustrate this shift by comparing models built from large factor sets with (i) traditional low-dimensional factor models and (ii) models that compress a large set of factors into a few principal components \citep{KozakNagelSantosh2020a}. They show that high-dimensional specifications outperform both alternatives and argue that principal-component compression may discard low-variance directions that still contain economically meaningful pricing information. Yet the idea that ``more is better" remains difficult to reconcile with the long-standing economic discipline of parsimony in asset pricing. While large factor sets may improve performance, it remains unclear what economic structure an ever-expanding set of factors actually captures.

We reconcile this tension by distinguishing two notions of sparsity: \emph{capacity sparsity} and \emph{factor sparsity}. The former concerns the dimensionality of the candidate feature space, whereas the latter refers to the parsimonious economic structure of priced risks. In \cite{DidisheimKeKellyMalamud2025a}, the sparsity at issue pertains primarily to capacity sparsity, as their critique is directed at the underperformance of approaches that restrict the candidate factor space, whether through classical low-dimensional asset-pricing structures, dimension reduction such as PCA, or strong shrinkage imposed on a rich set of candidate factors. By contrast, factor sparsity concerns whether the economic structure of priced risks itself remains sparse. Our key insight is that complexity and factor sparsity are complements. Complexity expands the menu of candidate components, while sparsity determines which components are selected. The economic value of complexity therefore does not arise from holding more factors; rather, it arises because a richer candidate space permits sharper selection. When sparsity is imposed not as compression but as selection within a rich nonlinear expansion, its role changes fundamentally. Rather than competing with complexity, sparsity governs how complexity translates into economic performance.

To operationalize this idea, we estimate conditional stochastic discount factors (SDFs) using a rich nonlinear feature expansion together with basis pursuit, the minimum-$\ell_1$ interpolating solution. The feature expansion enlarges the candidate factor span, while basis pursuit selects a parsimonious pricing kernel within that span. In contrast to approaches that discipline complexity by compressing the candidate space itself, our approach retains high capacity and imposes parsimony at the level of the selected pricing kernel.
 
To test for the presence of factor sparsity, we adopt the ``apples-to-apples'' empirical design of \citet{DidisheimKeKellyMalamud2025a}. Holding the information set fixed, we vary model complexity by expanding the number of nonlinear factor expansions $P$. A key property of basis pursuit is that it identifies at most $T$ active factors, where $T$ denotes the number of in-sample observations. Consequently, even as the candidate feature space becomes highly overparameterized-particularly in regimes where $P \gg T$---the resulting SDF remains sparse relative to the large factor models considered in \citet{DidisheimKeKellyMalamud2025a}.

Our empirical findings are striking. While both the ``dense'' SDFs of \citet{DidisheimKeKellyMalamud2025a} and the sparse SDFs from our approach improve Sharpe ratios as model complexity increases, the gains from dense SDFs plateau beyond a certain level of complexity, whereas the sparse SDFs continue to improve as complexity grows. This pattern provides strong support for our central claim: the value of complexity lies not in retaining many factors, but in expanding the search space from which a sparse economic structure of priced risks can be identified. To the best of our knowledge, this virtue of sparsity within complexity has not previously been documented and may therefore appear somewhat surprising. As a crude analogy, the phenomenon resembles the double-descent pattern documented in machine learning, which becomes visible only when model complexity is sufficiently large. Similarly, the performance advantage of our sparse SDFs emerges only at very high levels of complexity---beyond those considered in \citet{DidisheimKeKellyMalamud2025a}. This reconciles the apparent tension between complexity and parsimony in empirical asset pricing: complexity expands the search space, while sparsity governs the selection of economically meaningful pricing factors.

Our results reveal that dense and sparse SDFs improve performance through fundamentally different mechanisms. Dense SDFs primarily increase Sharpe ratios through volatility reduction: as model complexity grows, diversification across many weakly predictive components lowers portfolio volatility. By contrast, sparse SDFs achieve higher Sharpe ratios through stronger mean returns. More interestingly, although sparse SDFs typically exhibit slightly higher volatility than dense SDFs, they are {\it not} riskier in economic terms. In fact, they nearly dominate dense SDFs throughout the evaluation period, exhibiting almost first-order stochastic dominance and even near-pathwise dominance. Consistent with this evidence, certainty-equivalent returns are uniformly higher for sparse SDFs across a wide range of risk-aversion levels. The magnitude of these gains relative to \citet{DidisheimKeKellyMalamud2025a} suggests that sparse SDFs are more effective at extracting economically meaningful signals from a rich candidate space, whereas dense SDFs improve performance primarily through diversification across many weak predictors.

These results reframe the virtue-of-complexity debate. More is better---not because more factors are ultimately retained, but because a richer feature space improves the chance of uncovering a sparse pricing structure. The economic gains from complexity depend critically on the inductive structure that governs how signals are extracted. The dense benchmark converts complexity mainly into broad diversification across many weak components, whereas sparse selection converts complexity into targeted signal extraction. Recognizing this distinction restores sparsity as a central principle in high-dimensional asset pricing.

\subsection{Related Literature}
\label{sec:lit}
Our paper contributes to the growing literature that integrates high-capacity machine learning methods into empirical asset pricing. Within this literature, two closely related streams have emerged. The first focuses on improving return prediction from large sets of firm characteristics using flexible nonlinear prediction rules and extensive model comparisons. This work primarily addresses the forecasting problem at the level of individual assets or characteristic-sorted portfolios (e.g., \citealp{GuKellyXiu2020a}; see also \citealp{FengGiglioXiu2020a} for evidence on disciplined factor selection in large factor collections). A second stream applies machine learning methods to estimate conditional stochastic discount factors (SDFs) or conditional factor structures that price the cross-section of assets. Representative examples include deep-learning approaches that impose no-arbitrage moment restrictions in the estimation of the pricing kernel (e.g., \citealp{ChenPelgerZhu2024a}) and latent-factor models with conditioning information embedded through modern dimension-reduction architectures (e.g., \citealp{KellyPruittSu2019a,GuKellyXiu2021a}). Our paper falls within this second strand. We study conditional SDF estimation in high-capacity settings and examine how model complexity translates into economic performance.

Our work also contributes to the recent debate on the \emph{virtue of complexity} in empirical asset pricing. Recent evidence suggests that increasingly complex models---including overparameterized feature expansions paired with shrinkage---can deliver large gains in out-of-sample performance in return prediction and in factor pricing models (e.g., \citealp{KellyMalamudZhou2024a,DidisheimKeKellyMalamud2025a}). At the same time, recent critiques argue that some apparent gains from complexity in return prediction can reflect statistical artifacts or unstable extrapolation in highly flexible specifications (e.g., \citealp{Nagel2025a}). Related work clarifies why such critiques need not overturn the core virtue-of-complexity evidence, while also highlighting that complex models can exhibit nuanced behavior that depends on design and estimator choice (e.g., \citealp{KellyMalamud2025a}). Our results complement this debate by sharpening the mechanism: when the hypothesis class is rich, economic performance depends not only on the availability of many candidate components, but also on the selection principle that maps complexity into factor exposures.

More broadly, our paper relates to the long-standing role of sparsity and parsimony in empirical asset pricing. Classical practice emphasizes small factor representations, while modern high-dimensional approaches often discipline estimation through shrinkage and eigen-structure-aware regularization. For example, \citet{KozakNagelSantosh2020a} construct robust SDFs in large candidate sets by shrinking exposures associated with low-variance principal-component directions. In contrast, recent virtue-of-complexity evidence argues that aggressively compressing large factor sets through dimension reduction can discard economically meaningful low-variance directions and thereby degrade pricing performance (e.g., \citealp{DidisheimKeKellyMalamud2025a}). Our analysis clarifies that these perspectives concern different notions of sparsity. We distinguish \emph{capacity sparsity}, which limits the dimensionality of the candidate feature space, from \emph{factor sparsity}, which concerns whether the underlying economic structure of priced risks remains parsimonious. This distinction allows complexity and sparsity to coexist: complexity expands the hypothesis class, while sparsity governs selection of economically meaningful components within it.

This perspective also connects to insights from statistical learning theory that emphasize the role of inductive bias in overparameterized regimes. When many solutions fit the data similarly well, generalization is governed not only by model capacity but also by the estimator's implicit preferences over the solution set. \citet{Wilson2025a} develops this idea through the lens of \emph{soft inductive bias}, and related work on interpolation, double descent, and benign overfitting shows how selection rules within high-capacity classes can produce counterintuitive out-of-sample behavior even when training error is driven to zero (e.g., \citealp{HastieMontanariRossetTibshirani2022a,BartlettLongLugosiTsigler2020a}). Applied to asset pricing, the relevant question is therefore not only whether the SDF is estimated in a high-dimensional space, but which pricing kernel is selected from a large set of near-equivalent high-dimensional solutions.

Finally, our findings provide a complementary perspective on the ``factor zoo'' in empirical asset pricing. The proliferation of proposed factors has long been noted (e.g., \citealp{Cochrane2011a}) and is closely tied to concerns about multiple testing, replication, and post-publication decay in characteristic-based predictors (e.g., \citealp{HarveyLiuZhu2016a,HouXueZhang2020a,McLeanPontiff2016a}). At the same time, modern open-data efforts document broad reproducibility of many published predictors when construction choices are standardized (e.g., \citealp{ChenZimmermann2022a, JensenKellyPedersen2023a}). Our results suggest an additional interpretation that is especially salient in high-capacity conditional SDF settings: the challenge is not merely that many candidate factors exist, but that identifying the economically relevant ones requires a powerful search space combined with an effective selection mechanism.

\section{Empirical Methodology}
This section develops the empirical framework for estimating conditional stochastic discount factors (SDFs) and introduces the two notions of sparsity that arise in high-capacity asset pricing models. We begin with the conditional SDF representation used in modern empirical asset pricing. We then distinguish capacity sparsity from factor sparsity and explain how increasing model complexity can help reveal sparse economic structure. Next, we present our estimation approach, which combines scalable nonlinear feature expansions with basis pursuit to identify sparse pricing kernels within a large factor span. Finally, we characterize sparse and dense interpolating SDFs, clarifying when and why sparse selection can outperform ridgeless estimation.

\subsection{Conditional SDF Representation}
Let $R_{t+1} \in \mathbb{R}^{N_t}$ denote the vector of excess returns on the $N_t$ traded assets available at time $t$, and let $\mathcal F_t$ denote the information set available at time $t$. Absence of arbitrage implies the existence of a stochastic discount factor (SDF) $M_{t+1}$ satisfying
\begin{equation}
E[M_{t+1}R_{t+1}|\mathcal F_t]=0_{N_t},
\end{equation}
where $0_{N_t}$ denotes the zero vector in $\mathbb{R}^{N_t}$.

We focus on tradable SDFs. Following \citet{HansenRichard1987a}, any such SDF can be represented as the gross payoff of a managed portfolio,
\begin{equation} \label{m_t}
M_{t+1}=1-w_t'R_{t+1},
\end{equation}
where $w_t \in \mathbb{R}^{N_t}$ is $\mathcal F_t$-measurable.

Let $Z_{i,t} \in \mathbb{R}^D$ denote the characteristic vector of asset $i$ at time $t$, and let $s_1,...,s_P$ be $P$ basis functions of these characteristics. Define the feature matrix
$$S_t \in \mathbb{R}^{N_t \times P},\;\;
(S_t)_{i,p} = s_p(Z_{i,t}).
$$
We parameterize portfolio weights linearly in these features:
\begin{equation} \label{w_t}
w_{t}=\frac{1}{\sqrt{N_t}}S_t\lambda,
\end{equation}
where $\lambda \in \mathbb{R}^P$ is a coefficient vector.

This induces the $P$-dimensional vector of characteristic-managed factor returns
\begin{equation}
F_{t+1}:=\frac{1}{\sqrt{N_t}}S_t'R_{t+1} \in \mathbb{R}^P. \label{ft}
\end{equation}
Substituting \eqref{w_t} into \eqref{m_t} gives 
\begin{equation}
M_{t+1}=1-\lambda'F_{t+1}.
\end{equation}
Thus conditional SDF estimation can be viewed as selecting a portfolio within the span of characteristic-managed factors $F_{t+1}$.

Taking unconditional expectations in the pricing condition yields the necessary moment restriction
\begin{equation}
E[(1-\lambda'F_{t+1})F_{t+1}]=0_{P}.
\end{equation}
Let $\Sigma := E[F_{t+1}F_{t+1}']$ and $\mu := E[F_{t+1}]$. When $\Sigma$ is nonsingular, the population coefficient vector satisfies
\begin{equation}
\Sigma \lambda^\ast = \mu.
\end{equation}

Equivalently, $\lambda^*$ is the common optimizer of the mean-variance problem, the Sharpe-ratio problem, and the pricing-error problem:
\begin{align}
\max_{\lambda} \quad & \mu'\lambda - \tfrac{1}{2}\lambda'\Sigma\lambda,  \\
\max_{\lambda\neq 0} \quad & \frac{\mu'\lambda}{\sqrt{\lambda'\Sigma\lambda}},  \\
\min_{\lambda} \quad & (\mu - \Sigma\lambda)'\Sigma^{-1}(\mu - \Sigma\lambda). 
\end{align}
Empirically, conditional SDFs can be estimated through the mean-variance formulation, with performance typically evaluated through Sharpe ratios or pricing errors.

\subsection{Two Notions of Sparsity}
Recent work on the ``virtue of complexity'' \citep{DidisheimKeKellyMalamud2025a} studies regimes in which the number of candidate factors \(P\) is large relative to the time-series sample size \(T\). In such settings, estimation is often stabilized through ridge regularization,

\begin{equation}
\max_{\lambda}
\left\{
\mu'\lambda
- \tfrac{1}{2}\lambda'\Sigma\lambda
- \gamma\|\lambda\|_2^2
\right\}.
\end{equation}

For \(\gamma>0\), this formulation introduces \(\ell_2\)-shrinkage into the estimator. For non-negligible values of \(\gamma\), the shrinkage acts as a form of capacity control by disciplining estimation in a large candidate factor space. We refer to this broader idea as \emph{capacity sparsity}. It can also arise through ex ante restrictions on the factor span (for example, selecting a small number of economically motivated factors), through dimension-reduction methods such as principal components, or through shrinkage more broadly.

By contrast, our focus is on a different notion of sparsity, which we call \emph{factor sparsity}. Rather than restricting the size of the factor span, factor sparsity concerns the structure of the pricing kernel itself. Under this view, the true SDF may depend on only a small subset of economically meaningful directions, even when the candidate factor space is large. The key difficulty, however, is that this sparse structure is unknown ex ante. Precisely for that reason, increasing model complexity can play a constructive role. By enlarging the space of candidate factors, richer feature representations expand the set of directions in which the true sparse structure may lie. Complexity therefore facilitates discovery, while sparsity governs selection within the enlarged space of candidate pricing kernels. Figure~\ref{fig:factor_sparsity} provides a schematic illustration of these two notions of sparsity and their interplay. In the figure, the sparse region can be understood as the set of pricing kernels with at most \(T\) active loadings, consistent with the later role of basis pursuit when the true SDF is representable in this way.

The entire region in Figure~\ref{fig:factor_sparsity} represents the candidate space of pricing kernels at a given complexity level. The high-pricing-error region contains kernels that fail to price assets well. The low-pricing-error, overfitting region contains kernels that fit the data in sample but generalize poorly, whereas the low-pricing-error, good out-of-sample region contains kernels that both fit well and generalize well. The black dot denotes the true SDF \(\lambda^\ast\). In the right panel, overparameterization enlarges the candidate space, but the different regions need not expand proportionally, consistent with the intuition in \citet{Wilson2025a}. The figure is intended only as intuition, but it conveys the key idea: with limited capacity, the sparse structure is harder to uncover, whereas a richer candidate space increases the chance that economically meaningful sparse directions become identifiable. Basis pursuit, introduced later, is then designed to steer selection toward such sparse, well-generalizing kernels and, when the true sparse direction is represented in the enlarged span, toward \(\lambda^\ast\).

\begin{figure}[H]
    \centering
    \includegraphics[width=0.75\textwidth]{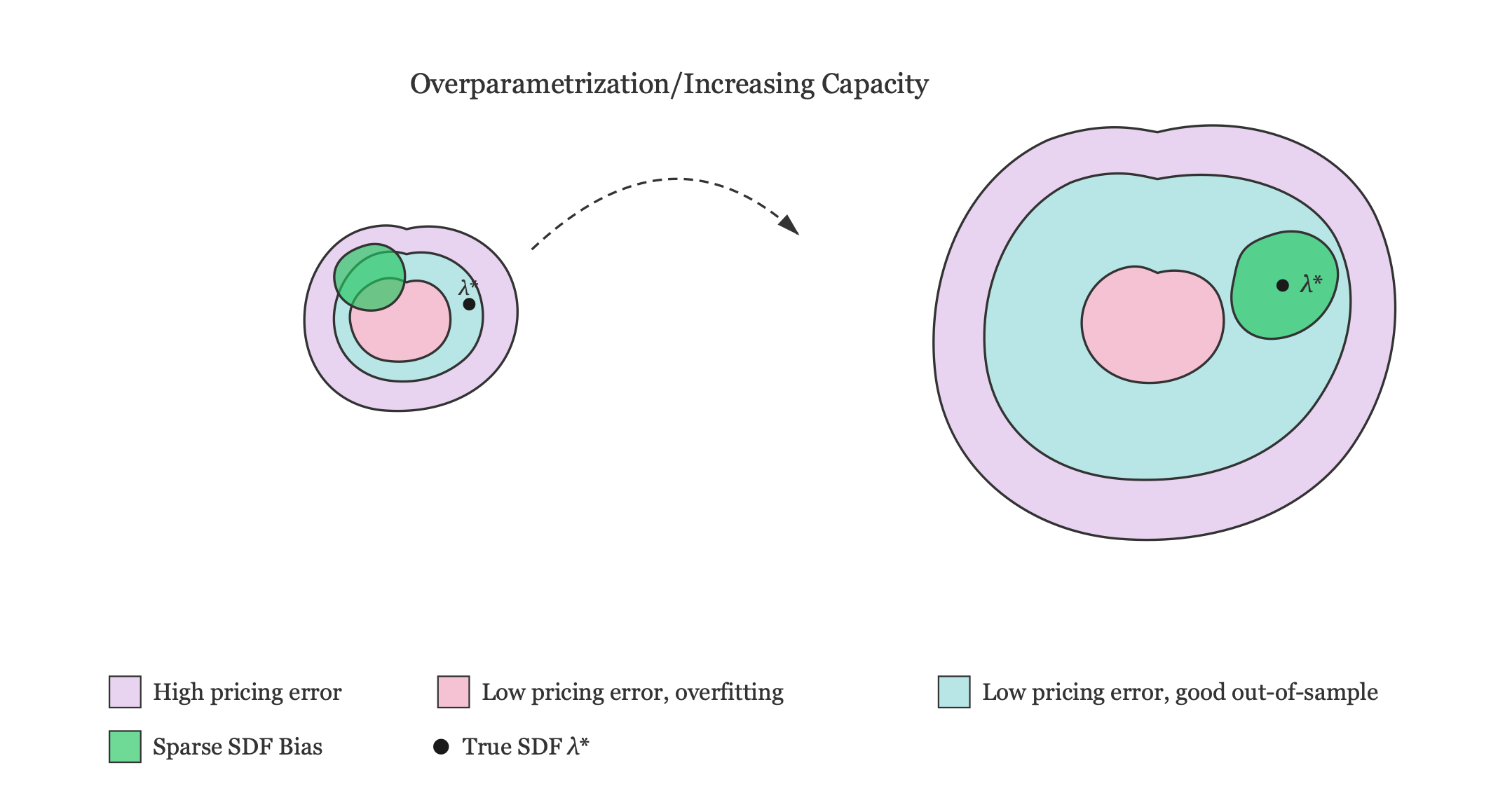}
    \caption{Schematic illustration of the interplay between model capacity and factor sparsity. Left: capacity sparsity restricts the candidate space of pricing kernels, making the true sparse pricing kernel harder to uncover. Right: increasing capacity enlarges the candidate space in which the true sparse pricing kernel may lie, while a sparsity-inducing selection rule favors kernels with at most \(T\) active loadings within that expanded space.}
    \label{fig:factor_sparsity}
  \end{figure}

\subsection{Sparse Selection in Expanded Factor Spaces} \label{sec:factor_sparsity}
To operationalize factor sparsity, we retain a large factor span and impose sparse selection within it. It is convenient to work with the regression form of conditional SDF estimation, because in overparameterized regimes it makes the selection problem especially transparent. Let \(F_P\in\mathbb R^{T\times P}\) denote the in-sample matrix of managed-factor returns at complexity level \(P\), whose \(t\)-th row is \(F_{t+1}'\), where \(F_{t+1}\) is defined in \eqref{ft}. Let \(\mathbf 1\in\mathbb R^T\) denote the unit payoff vector. Let $\mathcal{P}(\lambda)$ denote a regularization term. Then SDF estimation can be written as
\begin{equation} \label{ridge}
\hat{\lambda}
\in
\arg\min_{\lambda}
\frac{1}{2T}\|\mathbf{1}-F_P\lambda\|_2^2
+
\mathcal{P}(\lambda).
\end{equation}
When $P$ is large relative to $T$, the system may admit exact interpolating solutions---many pricing kernels price the managed factors equally well in sample. The central question becomes which SDF is selected from the set of exact interpolating pricing kernels
\[
\mathcal I_P:=\{\lambda\in\mathbb R^P:F_P\lambda=\mathbf 1\}.
\]


A natural selection principle under factor sparsity is the basis-pursuit limit of $\ell_1$ regularization. 

\paragraph{Basis Pursuit}
With $\ell_1$ regularization,
\[
\mathcal{P}(\lambda)=\alpha\|\lambda\|_1,
\]
the estimator of \eqref{ridge} converges as $\alpha\downarrow0$ to the basis pursuit solution
\begin{equation}
\hat{\lambda}_P^{BP} \in \arg\min_{\lambda \in I_P} \|\lambda\|_1.
\end{equation}
This formulation selects the minimum-$\ell_1$ pricing kernel among those that exactly price the managed factors in sample.

To enrich the candidate factor span without changing the underlying information set, we generate nonlinear transformations of the same $D$ base firm characteristics using random Fourier features (RFF) \citep{RahimiRecht2007a}.
Let \(Z_{i,t} \in \mathbb{R}^{D}\) denote the standardized characteristic vector for stock \(i\) at time \(t\).  We generate \(P\) random features as:
\begin{equation}\label{eq:rff}
(S_t)_{i,p} = \sqrt{\frac{2}{P}} \cos(\boldsymbol{\omega}_p' Z_{i,t} + b_p), \quad p = 1, \ldots, P,
\end{equation}
where \(\boldsymbol{\omega}_p \sim \mathcal{N}(0, \sigma^2 \mathbf{I}_{D})\) and \(b_p \sim \text{Uniform}[0, 2\pi]\). The bandwidth parameter \(\sigma\) controls the smoothness of the implied kernel.

\subsection{Sparse versus Dense SDFs}\label{sec:sparse-vs-dense}

In support of our proposed procedure, we characterize when and why the sparse SDF selected by basis pursuit can outperform its dense counterpart given by the ridgeless $\ell_2$ criterion. Consider the limit of the ridge-regularized problem \eqref{ridge} with $\mathcal P(\lambda)=\alpha\|\lambda\|_2^2$ as $\alpha\downarrow0$. In the overparameterized regime $P>T$, this selects the minimum-$\ell_2$ interpolator
\begin{equation}
\hat\lambda_P^{RL}\in\arg\min_{\lambda\in\mathcal I_P}\|\lambda\|_2,
\label{eq:ridgeless_interp}
\end{equation}
commonly referred to as ridgeless.

The distinction between sparse and dense SDFs is not one of in-sample fit: both interpolate exactly. It is entirely about \emph{selection} from the same interpolation set. We begin with a basic structural property of basis pursuit: even in the overparameterized regime, it admits an optimal solution whose support size is bounded by \(T\).

\begin{lemma}[Basis-pursuit solutions are sparse]
\label{lem:bp_sparse0}
There exists an optimal solution
\[
\hat\lambda_P^{BP}\in\arg\min_{\lambda\in\mathcal I_P}\|\lambda\|_1
\]
whose support size is at most $T$.
\end{lemma}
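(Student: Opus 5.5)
The plan is to recast basis pursuit as a linear program in standard form and invoke the fundamental theorem of linear programming: if a standard-form LP has an optimal solution, it has one at a basic feasible solution, and every basic feasible solution has at most as many nonzero entries as there are equality constraints. Throughout we assume the lemma is meant for the case $\mathcal I_P\neq\emptyset$. This always holds when $P>T$ and $F_P$ has full row rank, and otherwise the statement is vacuous.

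\emph{Reformulation and existence.} Write $\lambda=u-v$ with $u,v\in\mathbb R^P_{\ge 0}$ and consider
\[
\min_{u,v\ge 0}\ \mathbf 1_P'(u+v)\quad\text{s.t.}\quad [\,F_P\ \ -F_P\,]\begin{pmatrix}u\\ v\end{pmatrix}=\mathbf 1 .
\]
This problem is equivalent to $\min_{\lambda\in\mathcal I_P}\|\lambda\|_1$. Any feasible $\lambda$ gives a feasible pair with the same objective, namely $u=\lambda^+$ and $v=\lambda^-$. Conversely, any feasible $(u,v)$ gives $\lambda=u-v\in\mathcal I_P$ with $\|\lambda\|_1\le \mathbf 1'(u+v)$. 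So the two optimal values coincide, and optimal solutions map to optimal solutions.

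The LP is feasible and its objective is bounded below by $0$. An LP with a finite infimum attains it, so an optimum exists. The feasible polyhedron lies in the nonnegative orthant and therefore contains no line, so it has at least one vertex.

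\emph{Vertex optimum and sparsity.} By the fundamental theorem of LP, some optimal solution $(u^*,v^*)$ is a vertex, that is, a basic feasible solution. At a basic feasible solution, the columns of the $T\times 2P$ constraint matrix $[F_P\ -F_P]$ indexed by the positive entries are linearly independent. Hence at most $\operatorname{rank}(F_P)\le T$ entries of $(u^*,v^*)$ are positive.

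\emph{Back to $\lambda$.} Set $\hat\lambda=u^*-v^*$. This vector lies in $\mathcal I_P$ and satisfies $\|\hat\lambda\|_1\le \mathbf 1'(u^*+v^*)$, which equals the optimal value. So $\hat\lambda$ is a basis-pursuit minimizer. Its support is contained in $\{p:u^*_p>0\}\cup\{p:v^*_p>0\}$, which has at most $T$ elements.

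The only mildly delicate point is that $u_p^*$ and $v_p^*$ might both be positive for the same $p$. This does no harm, because that index still counts only once in the support of $\hat\lambda$. In fact it cannot occur at a vertex, since the columns $F_{P,p}$ and $-F_{P,p}$ are linearly dependent. Either way the bound $|\operatorname{supp}(\hat\lambda)|\le T$ follows.

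If one prefers to avoid LP theory, an alternative is a direct support-reduction argument. Start from any minimizer. If its support columns are linearly dependent, pick $d\neq 0$ supported there with $F_Pd=0$. Along $\lambda+td$ the $\ell_1$ norm is linear in $t$ near $0$, so it must be constant there by optimality. Move $t$ until some coordinate hits zero, which keeps the solution optimal and shrinks the support. Iterate until the support columns are independent, at which point there are at most $T$ of them.
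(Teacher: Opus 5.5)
Your proposal is correct and is essentially the paper's proof: you split $\lambda=u-v$ with $u,v\ge 0$, treat basis pursuit as a linear program with $T$ equality constraints, and use the fundamental theorem of linear programming to get an optimal basic feasible solution with at most $T$ positive variables, so $|\operatorname{supp}(\hat\lambda)|\le T$. You are more careful than the paper on attainment of the optimum, existence of a vertex, and equivalence of the two problems' optimal values, and your support-reduction alternative is also valid.
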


For our purposes, the key implication of Lemma~\ref{lem:bp_sparse0} is that enlarging the candidate factor span need not lead to a commensurate increase in the effective dimension of the selected SDF. Thus, basis pursuit can search over a much richer nonlinear span while remaining low-dimensional relative to the ambient feature space.

We next characterize how sparse and dense SDFs can differ in out-of-sample mean return. Let
\[
\Pi_P:=F_P'(F_PF_P')^{-1}F_P
\]
denote the orthogonal projector onto \(\operatorname{row}(F_P)\), and let
\[
M_P:=I-\Pi_P
\]
denote the orthogonal projector onto \(\ker(F_P)\). Here \(\operatorname{row}(F_P)\) is the subspace spanned by the rows of the in-sample managed-factor matrix \(F_P\), and \(\ker(F_P)\) is its orthogonal complement.

Let \(F_{P,t+1}^{oos}\in\mathbb R^P\) denote the out-of-sample managed-factor return vector associated with the \(P\)-dimensional feature expansion, and define
\[
\mu_P:=\mathbb E[F_{P,t+1}^{oos}],
\]
where the expectation is taken with respect to the data-generating distribution of firm-characteristic and next-period-return pairs. Write
\[
\mu_P^\parallel:=\Pi_P\mu_P,\qquad
\mu_P^\perp:=M_P\mu_P,
\]
so that
\[
\mu_P=\mu_P^\parallel+\mu_P^\perp,
\qquad
\mu_P^\parallel\in\operatorname{row}(F_P),\qquad
\mu_P^\perp\in\ker(F_P).
\]
This decomposition is natural for comparing sparse and dense SDFs. As shown below, the distinction between \(\operatorname{row}(F_P)\) and \(\ker(F_P)\) governs exactly how sparse and dense selection can differ in mean return.

Define
\[
\bar r_P(\lambda):=\mu_P'\lambda.
\]

\begin{proposition}[Mean-return gap and the role of complexity]
\label{prop:gap_identity1}
Suppose $F_P\in\mathbb R^{T\times P}$ has rank $T<P$ and $\mathcal I_P\neq\varnothing$. Then the ridgeless interpolator is
\[
\hat\lambda_P^{RL}=F_P'(F_PF_P')^{-1}\mathbf 1,
\]
so $\hat\lambda_P^{RL}\in\operatorname{row}(F_P)$, and every feasible interpolator can be written as
\[
\lambda=\hat\lambda_P^{RL}+h,
\qquad
h\in\ker(F_P).
\]
Let
\[
h_P^{BP}:=\hat\lambda_P^{BP}-\hat\lambda_P^{RL}.
\]
Then $h_P^{BP}\in\ker(F_P)$ and
\begin{equation}
\bar r_P(\hat\lambda_P^{BP})-\bar r_P(\hat\lambda_P^{RL})
=
{\mu_P^\perp}'h_P^{BP}
=
\|\mu_P^\perp\|_2\,\|h_P^{BP}\|_2\cos\theta_P,
\label{eq:mean_gap_identity}
\end{equation}
where $\theta_P$ is the angle between $\mu_P^\perp$ and $h_P^{BP}$ in $\ker(F_P)$.

Suppose that over a range of increasing complexity levels $P$:
\begin{enumerate}
\item[(i)] $\|\mu_P^\perp\|_2$ is increasing in $P$,
\item[(ii)] $\|h_P^{BP}\|_2\ge \underline h>0$,
\item[(iii)] $\cos\theta_P\ge \underline\rho>0$.
\end{enumerate}
Then
\[
\bar r_P(\hat\lambda_P^{BP})-\bar r_P(\hat\lambda_P^{RL})
\ge
\underline h\,\underline\rho\,\|\mu_P^\perp\|_2,
\]
so the lower bound on the mean-return gap increases with $P$.
\end{proposition}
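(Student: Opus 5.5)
The argument is elementary linear algebra built on the orthogonal decomposition $\mathbb R^P=\operatorname{row}(F_P)\oplus\ker(F_P)$; I would proceed in three steps.

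\textbf{Step 1: closed form of the ridgeless interpolator.} Since $F_P$ has full row rank $T$, the Gram matrix $F_PF_P'$ is invertible. Hence $\lambda_0:=F_P'(F_PF_P')^{-1}\mathbf 1$ is well defined and satisfies $F_P\lambda_0=\mathbf 1$. (This also shows that $\mathcal I_P\neq\varnothing$ is automatic.) The vector $\lambda_0$ is of the form $F_P'v$, so it lies in $\operatorname{row}(F_P)$. Any other $\lambda\in\mathcal I_P$ satisfies $F_P(\lambda-\lambda_0)=0$, so $h:=\lambda-\lambda_0\in\ker(F_P)$. Because $\lambda_0\perp h$, Pythagoras gives
\[
\|\lambda\|_2^2=\|\lambda_0\|_2^2+\|h\|_2^2 .
\]
This is minimized uniquely at $h=0$, which identifies $\hat\lambda_P^{RL}=\lambda_0$ and yields the parametrization $\lambda=\hat\lambda_P^{RL}+h$ with $h\in\ker(F_P)$.

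\textbf{Step 2: the gap identity.} The basis-pursuit solution $\hat\lambda_P^{BP}$ is feasible; it exists because the $\ell_1$ norm is coercive on the nonempty closed affine set $\mathcal I_P$, and Lemma~\ref{lem:bp_sparse0} may be invoked to fix a sparse choice. By Step 1, $h_P^{BP}\in\ker(F_P)$. Linearity of $\bar r_P$ gives
\[
\bar r_P(\hat\lambda_P^{BP})-\bar r_P(\hat\lambda_P^{RL})=\mu_P'h_P^{BP}.
\]
Writing $\mu_P=\mu_P^\parallel+\mu_P^\perp$ and using $\mu_P^\parallel\in\operatorname{row}(F_P)\perp\ker(F_P)$ kills the parallel term, so the difference equals ${\mu_P^\perp}'h_P^{BP}$. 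The cosine form is then the definition of the angle between two vectors of $\ker(F_P)$.

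\textbf{Step 3: the lower bound.} Substituting (ii) and (iii) into \eqref{eq:mean_gap_identity} gives
\[
\|\mu_P^\perp\|_2\,\|h_P^{BP}\|_2\cos\theta_P\ \ge\ \underline h\,\underline\rho\,\|\mu_P^\perp\|_2 .
\]
This uses $\|\mu_P^\perp\|_2\ge0$ and the fact that $\cos\theta_P>0$ keeps every factor nonnegative. Monotonicity of the bound in $P$ is then immediate from (i).

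The only step needing care is the degenerate case $\mu_P^\perp=0$ or $h_P^{BP}=0$, where $\theta_P$ is undefined. Condition (ii) rules out $h_P^{BP}=0$. If $\mu_P^\perp=0$, both sides of the bound are zero, so the inequality holds trivially. I would state this convention explicitly.
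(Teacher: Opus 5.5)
Your proof is correct and follows essentially the same route as the paper's: the closed form of the ridgeless interpolator, the parametrization of $\mathcal I_P$ as $\hat\lambda_P^{RL}+\ker(F_P)$, and orthogonality of $\operatorname{row}(F_P)$ and $\ker(F_P)$ to eliminate ${\mu_P^\parallel}'h_P^{BP}$. The only differences are minor refinements: you verify minimality and uniqueness of $\hat\lambda_P^{RL}$ with the Pythagorean identity instead of the paper's first-order conditions, and you spell out the sign argument and the degenerate case $\mu_P^\perp=0$, which the paper treats as ``immediate.''
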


Proposition~\ref{prop:gap_identity1} makes the mechanism transparent. The ridgeless interpolator is confined to $\operatorname{row}(F_P)$, whereas any feasible deviation from ridgeless must lie in $\ker(F_P)$. From the perspective of expected returns, this means that the minimum-$\ell_2$ rule can exploit only the component of population mean structure that projects onto the subspace generated by historical in-sample pricing relations; any mean-bearing structure orthogonal to that subspace is necessarily left unused. Basis pursuit, by contrast, is not confined in this way: its kernel displacement can access such directions while preserving exact in-sample fit. The mean-return advantage of basis pursuit is therefore governed by three quantities: the magnitude of omitted mean-bearing structure $\|\mu_P^\perp\|_2$, the size of the sparse kernel displacement $\|h_P^{BP}\|_2$, and its alignment with that structure through $\cos\theta_P$. In particular, basis pursuit exceeds ridgeless in mean exactly when its kernel displacement is positively aligned with the omitted component.

More importantly, the proposition clarifies how complexity can help. As the candidate span becomes richer, the component of mean returns omitted by the ridgeless row space may grow, creating additional scope for sparse selection to exploit economically relevant structure. If basis pursuit continues to make a nontrivial and favorably aligned kernel move, that scope translates into a widening mean-return advantage. Among the three conditions in Proposition~\ref{prop:gap_identity1}, condition $(ii)$ is the mildest in overparameterized settings, since basis pursuit and ridgeless will typically differ by a nontrivial kernel component. Condition $(i)$ is also plausible as complexity grows, because a fixed-\(T\) row space occupies a shrinking fraction of the ambient space as \(P\) increases. The most substantive condition is $(iii)$: sparse selection must not merely differ from ridgeless, but differ in a direction positively aligned with omitted mean-bearing structure. This is precisely where our hypothesis of the virtue of sparsity in complexity enters.

We next examine how the volatility of sparse SDFs behaves as complexity increases. A useful benchmark observation is that the scale of the sparse SDF remains controlled. Let
\[
v_P:=\min_{\lambda\in\mathcal I_P}\|\lambda\|_1.
\]
Then
\[
\|\hat\lambda_P^{BP}\|_2\le \|\hat\lambda_P^{BP}\|_1=v_P.
\]
If the feature expansions are nested in $P$, then $v_P$ is nonincreasing in $P$, since the optimizer at a lower complexity level remains feasible after padding with zeros. Thus, under nested expansions, the $\ell_2$ norm of the sparse SDF is bounded above by a nonincreasing sequence. If, in addition, the managed-factor covariance matrix $\Sigma_P$ satisfies
\[
\lambda_{\max}(\Sigma_P)\le \bar\sigma^2
\]
over the range of interest, then
\[
\sqrt{\hat\lambda_P^{BP\prime}\Sigma_P\hat\lambda_P^{BP}}
\le
\bar\sigma\,\|\hat\lambda_P^{BP}\|_2
\le
\bar\sigma\,v_P.
\]

This benchmark argument complements Proposition~\ref{prop:gap_identity1}: the proposition explains how sparse selection can generate a mean advantage, while the scale-control argument shows why the volatility of the sparse SDF may also decline as complexity increases and, more generally, need not increase uncontrollably. Taken together, they provide a coherent explanation for our empirical finding that the Sharpe ratio of sparse SDFs ultimately continues to improve in rich feature spaces: complexity expands the scope for sparse selection to exploit omitted mean-bearing structure, while scale remains sufficiently controlled for those mean gains to translate into stronger risk-adjusted performance.

More broadly, this mechanism gives theoretical support to our hypothesis of the virtue of sparsity in complexity. When mean performance improves, that advantage strengthens with complexity, and the selected support remains highly stable and parsimonious, the evidence points to factor sparsity: the economically relevant pricing structure is concentrated in a small subset of directions within the expanded nonlinear span. 

\section{Empirical Design and Results} \label{sec:empirical_results}

This section presents the empirical design and out-of-sample evidence on how sparse and dense SDFs evolve as complexity $P$ increases. 


\subsection{Data Preprocessing and Experimental Setup}
\paragraph{Data}
We use the same data construction and source as \citet{DidisheimKeKellyMalamud2025a}. The underlying dataset is the comprehensive sample of monthly US stock returns and firm characteristics compiled by \citet{JensenKellyPedersen2023a} (JKP); stock-level data and documentation are available at \url{https://jkpfactors.com}. The universe includes NYSE, AMEX, and NASDAQ common stocks (CRSP share codes 10, 11, or 12) excluding nano caps (market cap below the first percentile of NYSE market cap). We reduce the 153 JKP characteristics to \(D=130\) with the fewest missing values, drop stock-months with more than 30\% missing among these 130, and cross-sectionally rank-standardize each characteristic to \([-0.5,0.5]\) following \citet{GuKellyXiu2020a}.

\paragraph{Model setup}

We summarize model complexity by
\[
c := P/T,
\]
where \(T\) denotes the training-window length. In our implementation, we fix \(T = 60\) months (five years of training data) and vary \(P \in \{6, \ldots, 30000\}\), which yields complexity ratios
\begin{equation}
c \in \{0.1, \ldots, 5000\}.
\end{equation}
This grid spans the underparameterized regime (\(c < 1\)), the interpolation threshold (\(c = 1\)), and the overparameterized regime (\(c > 1\)). For each value of \(c\), we generate 20 independent RFF draws, corresponding to different realizations of \(\{\boldsymbol{\omega}_p, b_p\}\), and average the results to reduce Monte Carlo variation due to any particular feature realization.\footnote{Following \citet{KellyMalamudZhou2024a}, we independently draw the bandwidth parameter $\sigma$, which controls the smoothness of the implied kernel, uniformly from the grid $\{0.5,0.6,0.7,0.8,0.9,1.0\}$ for each independent draw of the RFF frequencies $\{\boldsymbol{\omega}_p,b_p\}_{p=1}^P$ in \eqref{eq:rff}.
This embeds varying degrees of nonlinearity in the resulting random features.}

\paragraph{Evaluation protocol and performance metrics}
Our evaluation design follows the rolling-window framework of \citet{DidisheimKeKellyMalamud2025a}. Fixing a training-window length of \(T=60\) months (five years), we roll the estimation window forward one month at a time. At each formation month \(t\), we estimate \(\hat{\boldsymbol{\lambda}}_t\) using the most recent \(T\) months of training data. Using the characteristics observed at month \(t\) and the asset returns realized in month \(t+1\), we form the corresponding managed-factor return vector \(F_{t+1}\). The realized out-of-sample SDF portfolio return is then
\[
\hat r_{t+1}=\hat{\boldsymbol{\lambda}}_t'F_{t+1}.
\]
Repeating this procedure yields a sequence of \(N=360\) monthly out-of-sample returns, denoted by \(\{\hat r_\tau\}_{\tau=1}^N\).

The primary performance metrics are the monthly out-of-sample Sharpe ratio,
\begin{equation}
\widehat{\mathrm{SR}} = \frac{\bar r}{\hat{\sigma}_r},
\qquad
\bar r = \frac{1}{N}\sum_{\tau=1}^{N}\hat r_\tau,
\qquad
\hat{\sigma}_r = \sqrt{\frac{1}{N-1}\sum_{\tau=1}^{N}(\hat r_\tau-\bar r)^2},
\end{equation}
and the out-of-sample Hansen--Jagannathan distance (HJD),
\begin{equation}
\hat{\mathcal D}^{HJ}
=
\hat{\mathbb E}_{OS}\!\left[\hat M_\tau F_\tau\right]'
\hat{\mathbb E}_{OS}\!\left[F_\tau F_\tau'\right]^+
\hat{\mathbb E}_{OS}\!\left[\hat M_\tau F_\tau\right],
\end{equation}
where
\[
\hat M_\tau = 1-\hat{\boldsymbol{\lambda}}_{\tau-1}'F_\tau,
\qquad
\hat{\mathbb E}_{OS}[\,\cdot\,]=\frac{1}{N}\sum_{\tau=1}^{N}[\,\cdot\,],
\]
and \((\cdot)^+\) denotes the Moore--Penrose pseudoinverse. Here \(F_\tau\) denotes the managed-factor excess-return vector in the \(\tau\)-th out-of-sample month.

To characterize the performance comparison more fully, we move beyond volatility-based summaries. This is particularly important in our setting, where, as shown later, higher volatility does not imply greater economic risk. We therefore examine the return distribution directly through stochastic-dominance comparisons, using Value-at-Risk (VaR), Expected Shortfall (ES), and the analogous upper-tail mean across a range of percentiles. Let \(\{\hat r_\tau(c)\}_{\tau=1}^N\) denote the monthly out-of-sample SDF returns at complexity level \(c\), and let \(Q_q(c)\) denote the empirical \(q\)-quantile of this sample. For any \(q\in(0,1)\), we define
\begin{align}
\mathrm{VaR}_q(c) &:= Q_q(c),\\
\mathrm{ES}_q(c) &:= \frac{1}{\#\{\tau:\hat r_\tau(c)\le Q_q(c)\}}
\sum_{\tau:\hat r_\tau(c)\le Q_q(c)} \hat r_\tau(c),
\qquad q\le 0.5,
\end{align}
and, for \(q>0.5\), define the upper-tail mean by
\begin{equation}
\mathrm{UTM}_q(c)
:=
\frac{1}{\#\{\tau:\hat r_\tau(c)\ge Q_q(c)\}}
\sum_{\tau:\hat r_\tau(c)\ge Q_q(c)} \hat r_\tau(c).
\end{equation}
Thus, \(\mathrm{VaR}_q(c)\) records the empirical lower-tail quantile, \(\mathrm{ES}_q(c)\) averages returns below that quantile, and \(\mathrm{UTM}_q(c)\) averages returns above the corresponding upper-tail quantile. In the empirical analysis, we trace these quantities across a spectrum of percentile levels to assess how the distributional comparison between sparse and dense SDFs evolves across both downside and upside tails.

Finally, to translate statistical outperformance into investor welfare, we compute certainty-equivalent (CE) returns for a CRRA investor with relative risk aversion parameter \(\gamma\). Let \(R_{\tau}(c)=1+\hat{r}_{\tau}(c)\) denote the gross monthly out-of-sample return at complexity level \(c\). The CE gross return \(R^{\mathrm{CE}}(c)\) is defined as the constant gross return that delivers the same average power utility as the realized return sequence:
\begin{equation}
\frac{\bigl(R^{\mathrm{CE}}(c)\bigr)^{1-\gamma}}{1-\gamma}
\;=\;
\frac{1}{N}\sum_{\tau=1}^{N} \frac{R_{\tau}(c)^{\,1-\gamma}}{1-\gamma},
\end{equation}
which implies
\begin{equation}
R^{\mathrm{CE}}(c)=\left(\frac{1}{N}\sum_{\tau=1}^{N} R_{\tau}(c)^{\,1-\gamma}\right)^{\frac{1}{1-\gamma}},
\qquad
\mathrm{CE}(c)=R^{\mathrm{CE}}(c)-1.
\end{equation}

\subsection{Main Results: Virtue of Complexity Curves}
Figure \ref{fig:voc_main_panel} presents the main results: virtue-of-complexity curves for out-of-sample mean returns, volatility, Sharpe ratios, and pricing error as functions of complexity \(c\).\footnote{Because the full complexity grid compresses the low-\(c\) region, we report virtue-of-complexity curves for \(c<10\) separately in \hyperref[app:low_c]{Appendix B}.} In the overparameterized regime, the dense SDF (ridgeless) plateaus in mean return, whereas the sparse SDF (basis pursuit) continues to improve as complexity rises. Volatility declines for both specifications, with the dense SDF remaining less volatile, consistent with its more diversified structure. Pricing error also falls for both models, although the sparse SDF ultimately achieves the lower pricing error. As a result, the sparse SDF eventually overtakes the dense SDF in Sharpe ratio. This pattern is consistent with our central mechanism: as complexity rises, the richer feature span gives sparse selection access to a broader set of candidate directions, while the sparse estimator continues to select a parsimonious pricing kernel. The gains therefore appear to come not from retaining more factors, but from allowing a sparse structure to be identified within a much richer nonlinear span. Most importantly, this outperformance emerges only in the very high-complexity regime, \(c>2000\), well beyond the range examined in prior work, where the upper limit of the complexity grid in \citet{DidisheimKeKellyMalamud2025a} is \(1{,}000\). That is precisely where our central claim becomes visible: complexity and factor sparsity are complements, and the value of sparsity fully reveals itself only when the candidate span is sufficiently rich.

\begin{figure}[H]
    \centering
    \begin{subfigure}{0.48\textwidth}
        \centering
        \includegraphics[width=\linewidth]{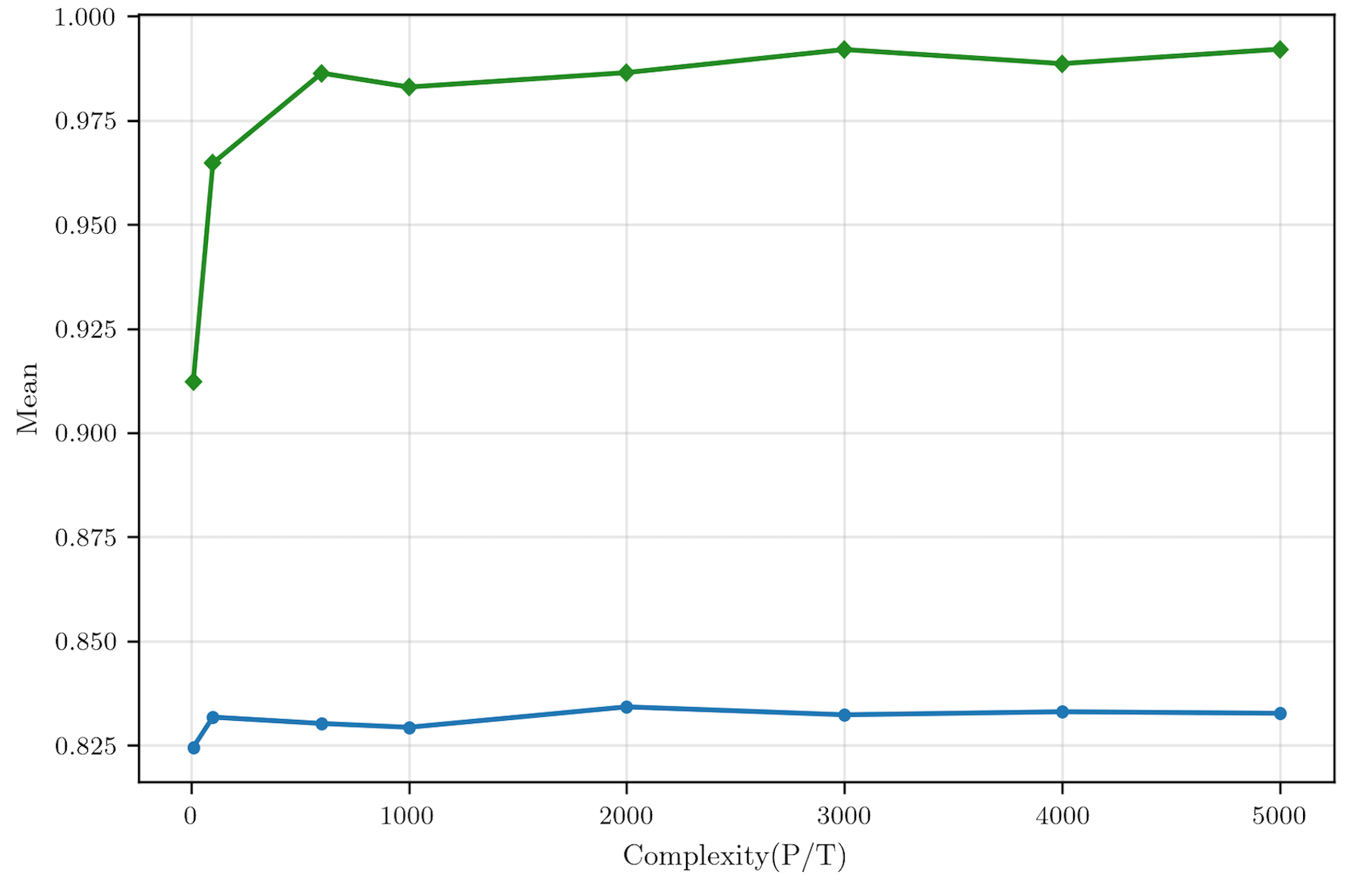}
        \caption{Mean returns}
        \label{fig:voc_main_a}
    \end{subfigure}
    \hfill
    \begin{subfigure}{0.48\textwidth}
        \centering
        \includegraphics[width=\linewidth]{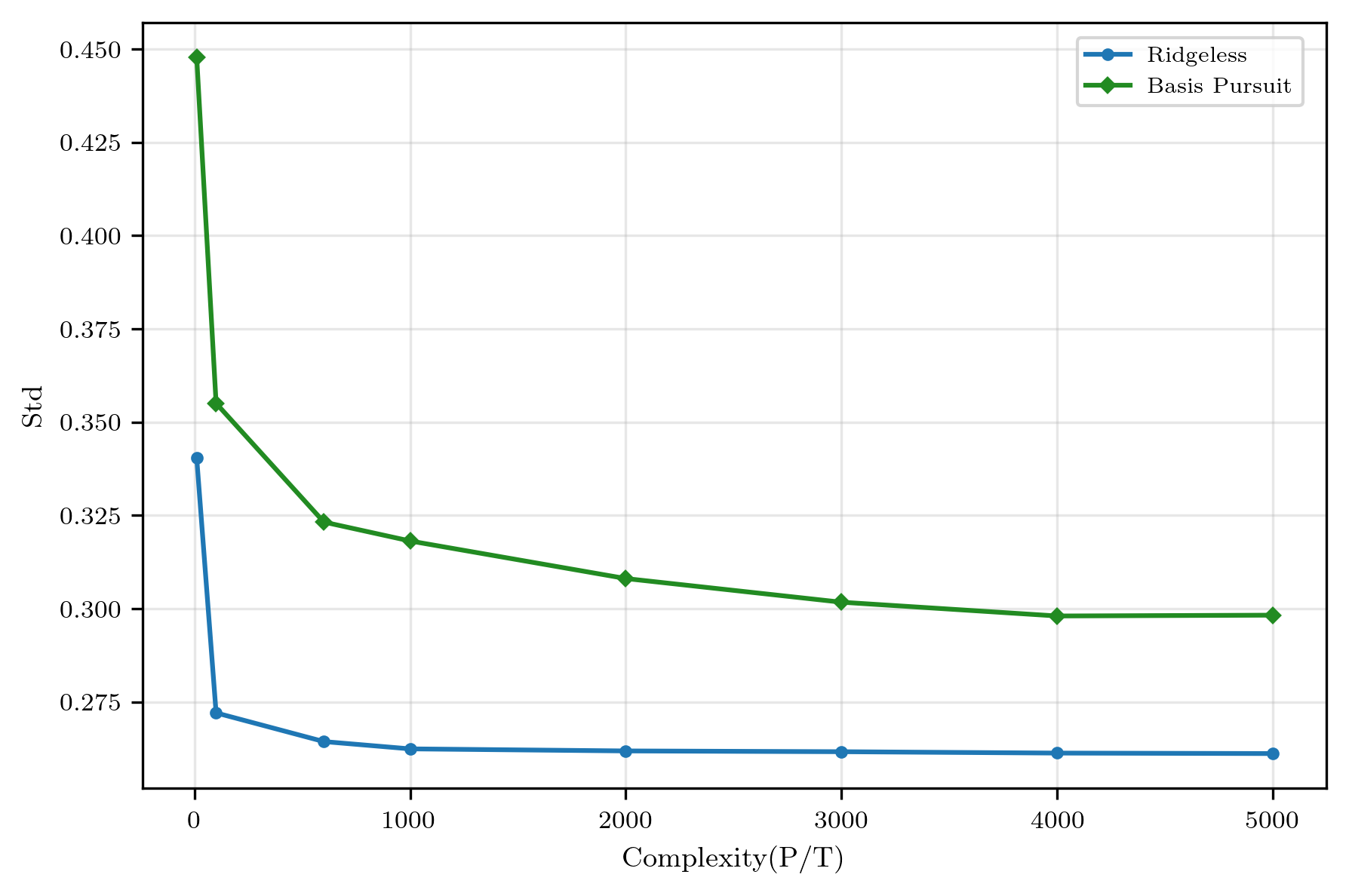}
        \caption{Volatility}
        \label{fig:voc_main_b}
    \end{subfigure}
  
    \medskip  
  
    \begin{subfigure}{0.48\textwidth}
        \centering
        \includegraphics[width=\linewidth]{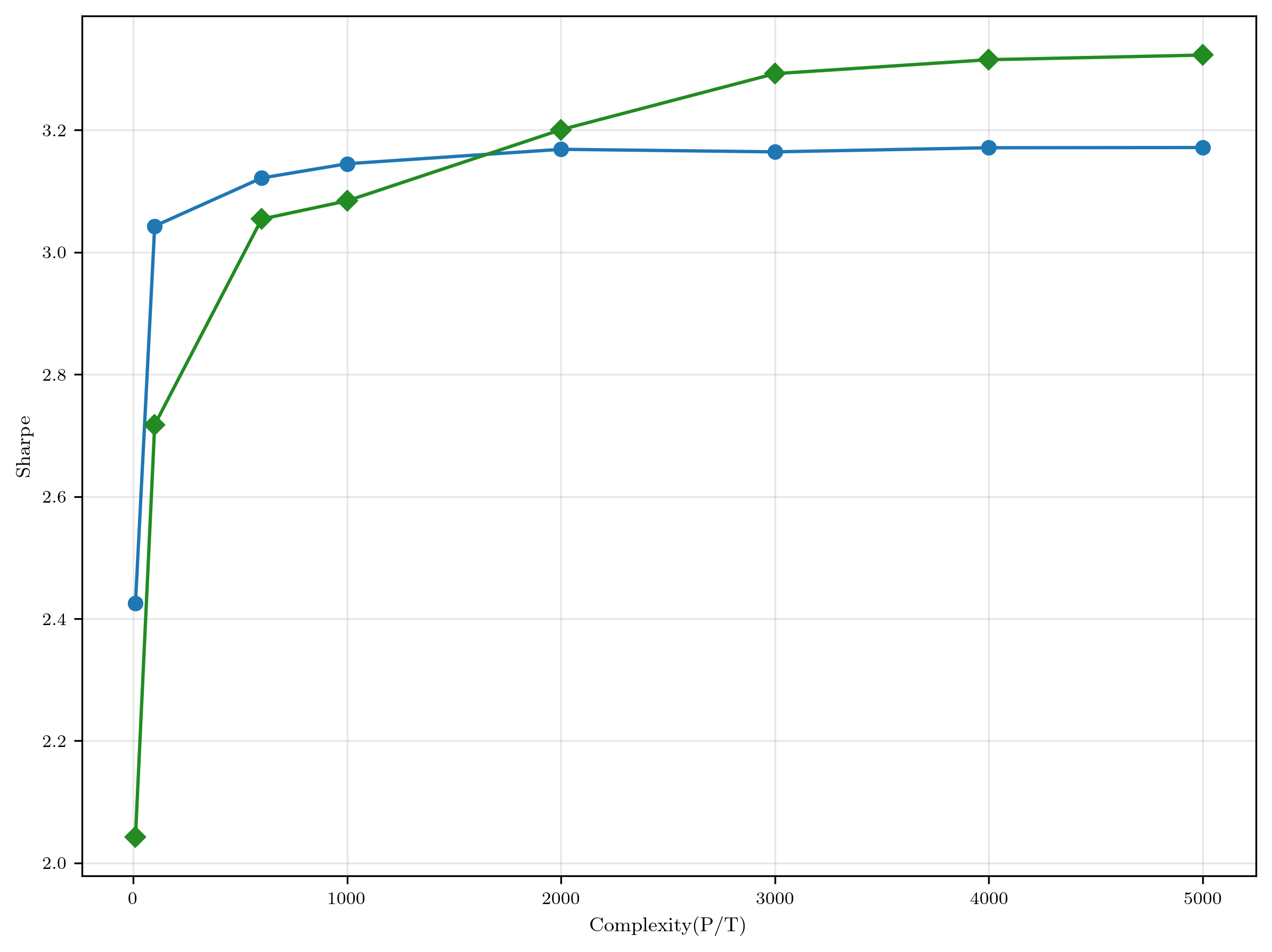}
        \caption{Sharpe ratio}
        \label{fig:voc_main_c}
    \end{subfigure}
    \hfill
    \begin{subfigure}{0.48\textwidth}
        \centering
        \includegraphics[width=\linewidth]{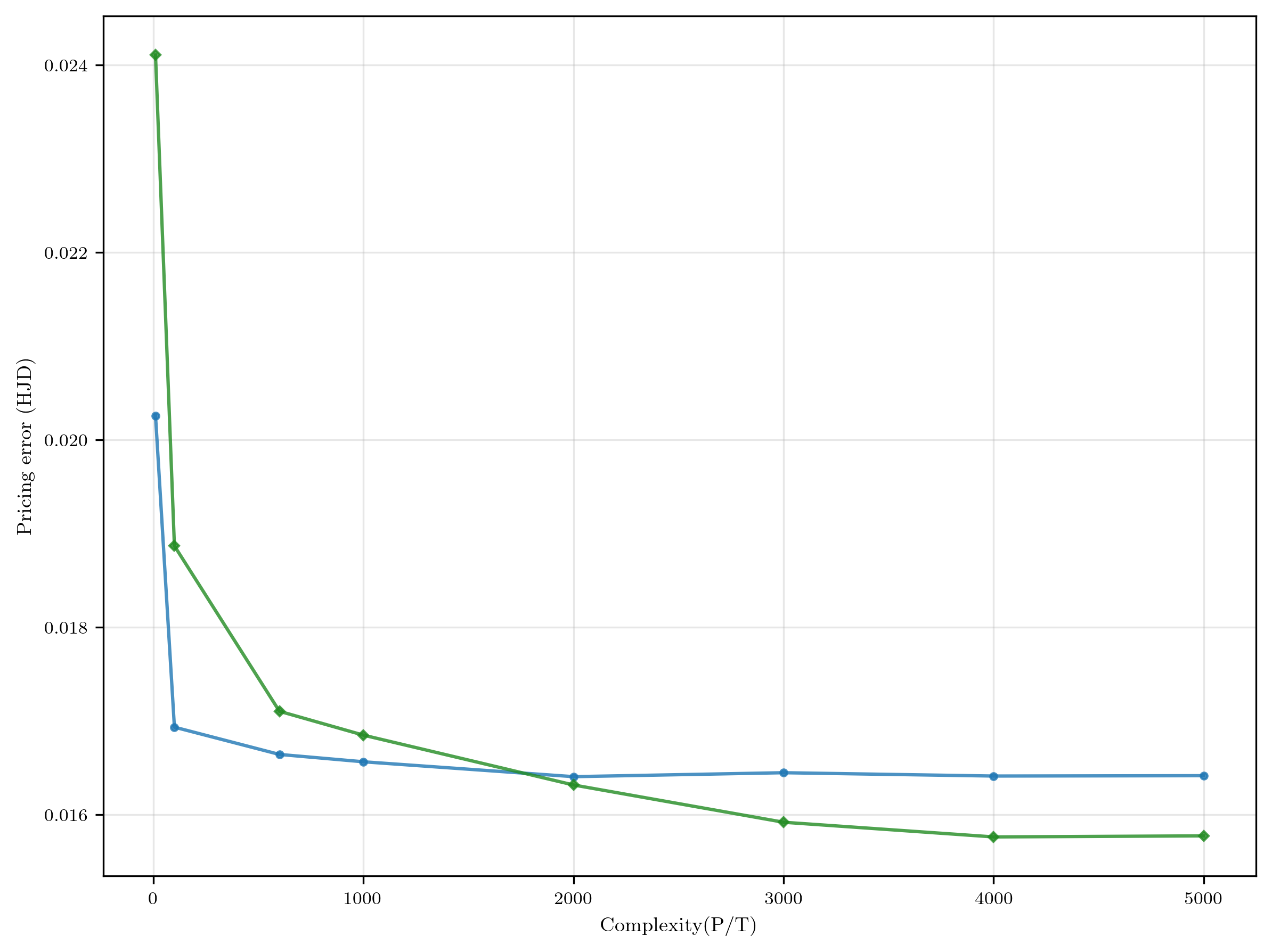}
        \caption{Pricing Error}
        \label{fig:voc_main_d}
    \end{subfigure}
  
    \caption{Virtue-of-complexity curves for the sparse (basis pursuit) and dense (ridgeless) SDFs as a function of complexity \(c\), and training window \(T = 60\) months over the 1993--2023 out-of-sample period. Panels show (a) mean returns, (b) volatility, (c) Sharpe ratios, and (d) pricing error. The sparse SDF's Sharpe ratio surpasses the dense SDF beyond \(c \approx 2{,}000\), driven by rising mean returns and declining volatility as basis pursuit selects increasingly relevant factors from the expanding candidate set.}
    \label{fig:voc_main_panel}
  \end{figure}


\subsection{Dominance}

The fact that the sparse SDF often exhibits slightly higher volatility than the dense benchmark may initially suggest a standard risk--return tradeoff. As we show in this section, that interpretation can be misleading. A closer look reveals that volatility is not the most informative notion of risk in our setting. Once we examine return quantiles, Value-at-Risk, and Expected Shortfall across a wide range of percentiles, we find that the sparse SDF’s advantage is not confined to the mean. In the high-complexity regime, it extends across nearly the entire return distribution (see Figure~\ref{distribution-dominance}), indicating that the sparse SDF nearly stochastically dominates the dense SDF. A more nuanced picture emerges when we trace the distribution across complexity. The effect of complexity is not symmetric across the upper and lower tails. As complexity increases, downside performance improves markedly, whereas upper-tail performance becomes flatter and may weaken somewhat. At low complexity, the sparse SDF already tends to outperform the dense benchmark in the upper tail, but can lag in parts of the downside tail. As complexity rises, however, this downside gap closes and eventually reverses. The complementarity between complexity and factor sparsity is therefore especially visible in downside protection and mean performance. By the highest complexity levels, the sparse SDF dominates the dense benchmark across almost the full return distribution.

To push the comparison to its most granular level, Figure~\ref{fig:return_path} plots the realized monthly returns of the two SDF portfolios side by side in the highest-complexity regime. Strikingly, the sparse SDF exceeds the dense benchmark in nearly every month. This near-pathwise dominance provides especially compelling evidence that the advantage of sparse selection is not a fragile artifact driven by a few exceptional periods.

\begin{figure}[H]
  \centering
  \begin{subfigure}{0.48\textwidth}
      \centering
      \includegraphics[width=\linewidth,height=0.8\textheight,keepaspectratio]{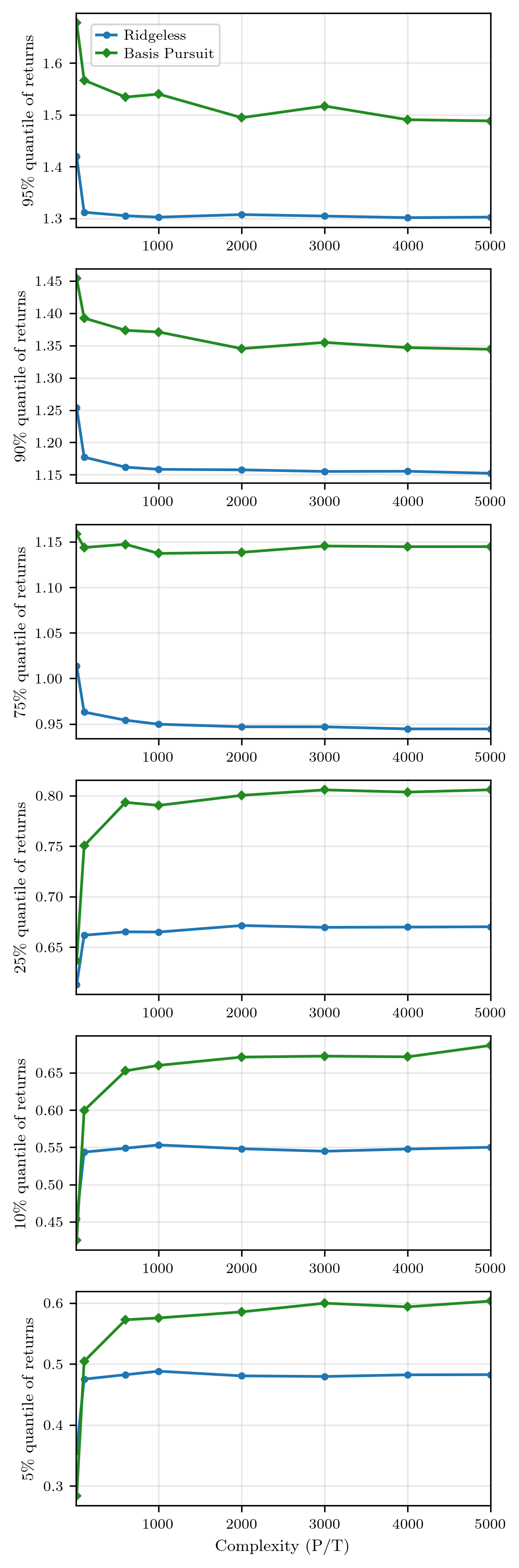}
      \caption{Return quantiles across complexity levels\newline}
      \label{fig:quantiles_panel}
  \end{subfigure}
  \hfill
  \begin{subfigure}{0.48\textwidth}
      \centering
      \includegraphics[width=\linewidth,height=0.8\textheight,keepaspectratio]{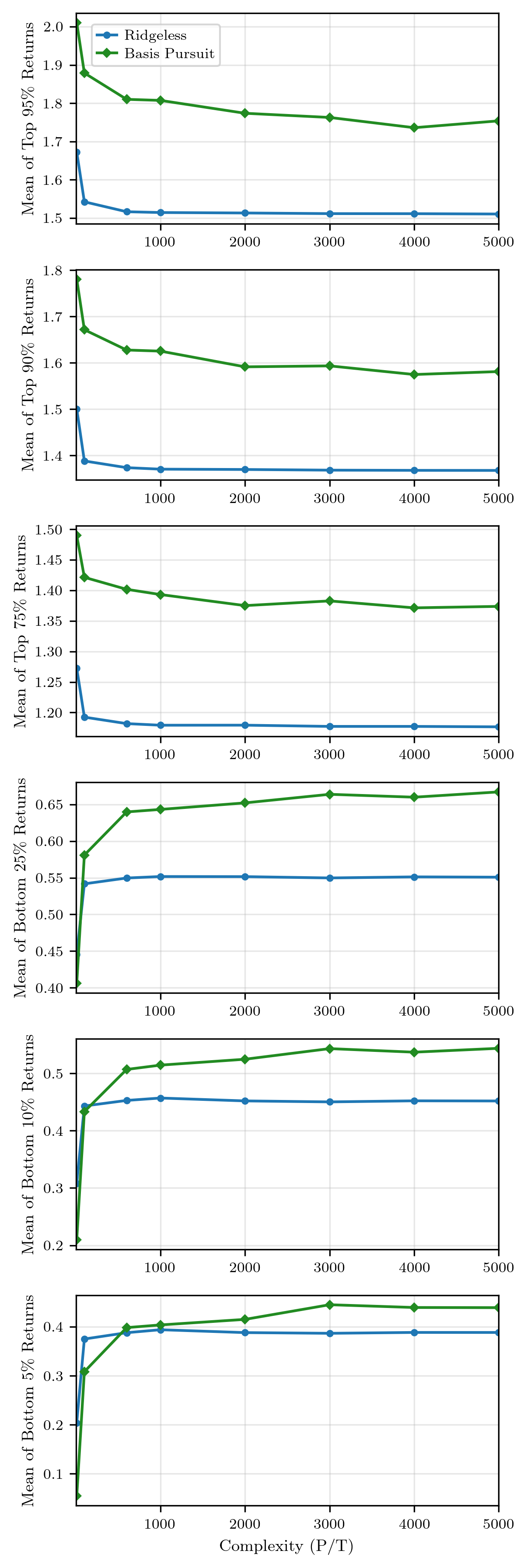}
      \caption{Expected Shortfall (ES) and upper-tail mean across complexity levels}
      \label{fig:cvar_panel}
  \end{subfigure}
  
  \caption{Distributional performance of the sparse (basis pursuit) and dense (ridgeless) SDFs across complexity levels. Panel~(a) reports return quantiles across selected percentile levels. Panel~(b) reports lower-tail Expected Shortfall and upper-tail mean at the corresponding percentile cutoffs. At high complexity, the sparse SDF outperforms the dense benchmark across nearly the full return distribution.}
\label{distribution-dominance}
\end{figure}

\begin{figure}[H]
  \centering
  \includegraphics[width=0.9\textwidth,height=0.45\textheight,keepaspectratio]{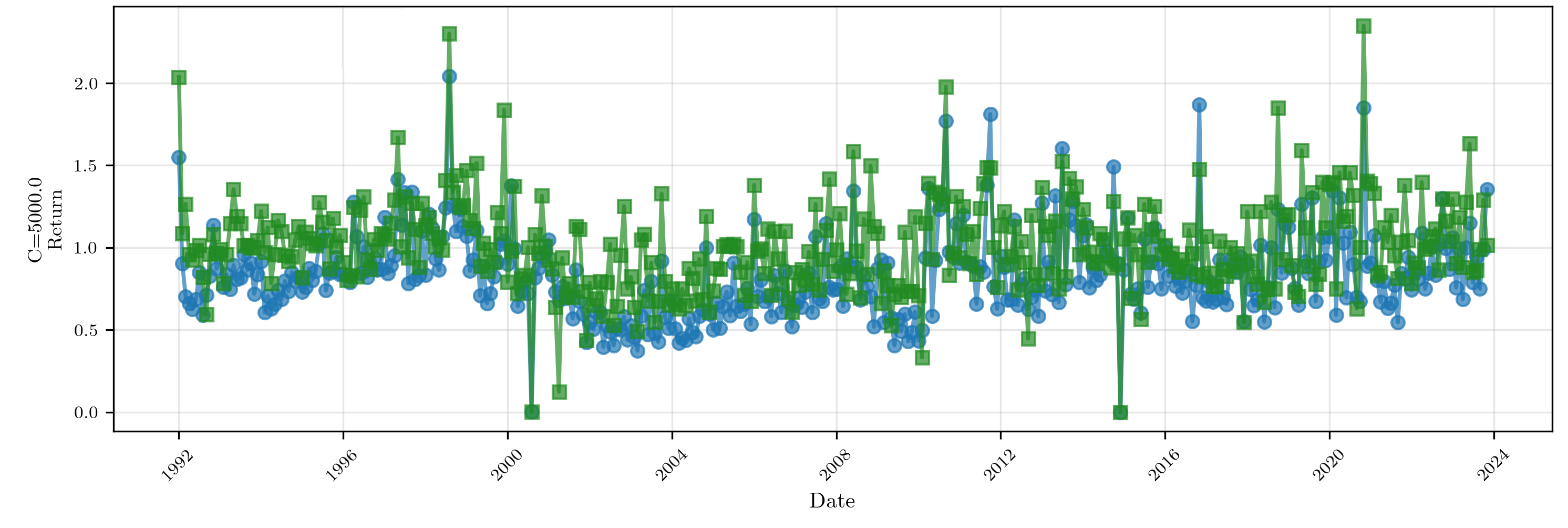}
 \caption{Monthly realized SDF portfolio returns for the sparse SDF (green) and the dense SDF (blue) in the high-complexity regime (\(c=5000\)). The return series of the sparse SDF exceeds that of the dense SDF in nearly every month, providing near-pathwise dominance that complements the distributional dominance shown in Figure~\ref{distribution-dominance}.}
  \label{fig:return_path}
\end{figure}

\subsection{Welfare Gains from Sparsity: Certainty-Equivalent Analysis}

To quantify the welfare gains delivered by sparse SDFs, we compute certainty-equivalent (CE) returns for a constant relative risk aversion (CRRA) investor. Unlike the Sharpe ratio, the CE maps the full distribution of SDF portfolio returns into investor utility and thus provides a direct measure of economic value. As shown in Figure~\ref{fig:ce_gamma_panel}, the sparse SDF delivers higher investor welfare than the dense SDF at every level of risk aversion we consider, although the magnitude of this advantage varies with \(\gamma\).

The welfare advantage of the sparse SDF is largest at low risk aversion (\(\gamma=1\)), where CE comparisons are driven primarily by mean returns—the main channel through which basis pursuit outperforms ridgeless portfolios. As risk aversion rises, the CE gap narrows, reflecting the greater sensitivity of certainty-equivalent comparisons to finer differences across the full return distribution. The welfare gains from the sparse portfolio therefore become smaller for more risk-averse investors, but remain positive. Even at \(\gamma=5\), the sparse SDF continues to deliver higher CE returns in the high-complexity regime. Taken together with the distributional-dominance evidence, these results show that the advantage of the sparse SDF is not merely statistical: it survives translation into economically meaningful welfare criteria.

\begin{figure}[H]
  \centering
  \begin{subfigure}{0.32\textwidth}
      \centering
      \includegraphics[width=\linewidth]{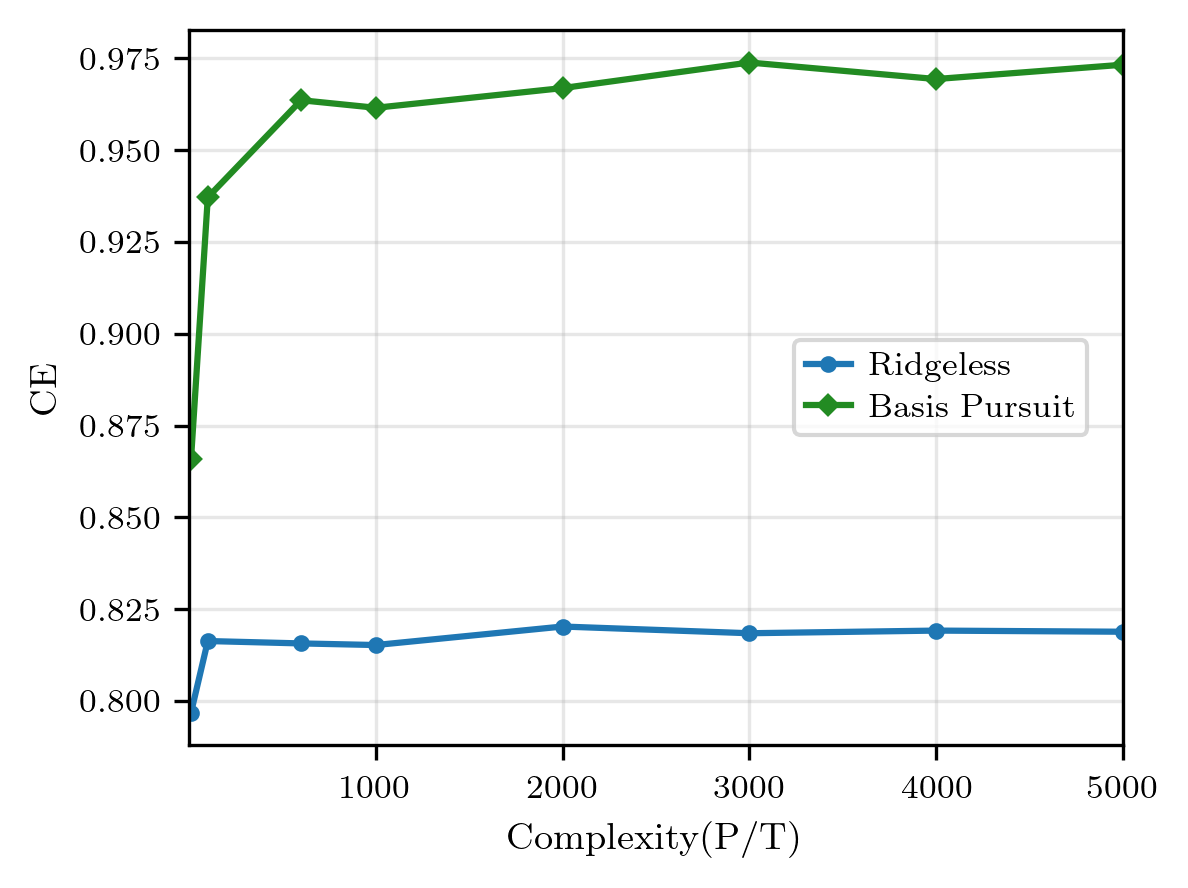}
      \caption{$\gamma = 1$}
      \label{fig:ce_gamma_a}
  \end{subfigure}
  \hfill
  \begin{subfigure}{0.32\textwidth}
      \centering
      \includegraphics[width=\linewidth]{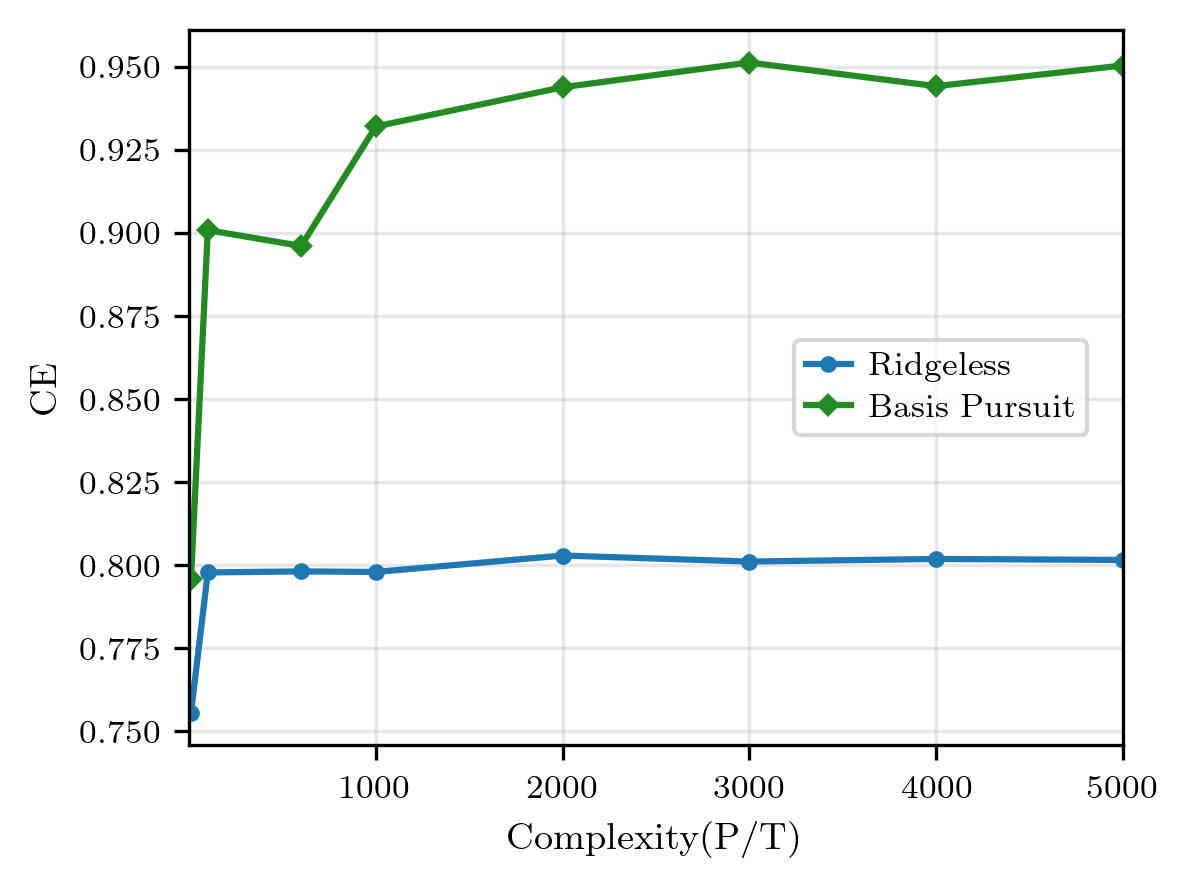}
      \caption{$\gamma = 2$}
      \label{fig:ce_gamma_b}
  \end{subfigure}
  \hfill
  \begin{subfigure}{0.32\textwidth}
      \centering
      \includegraphics[width=\linewidth]{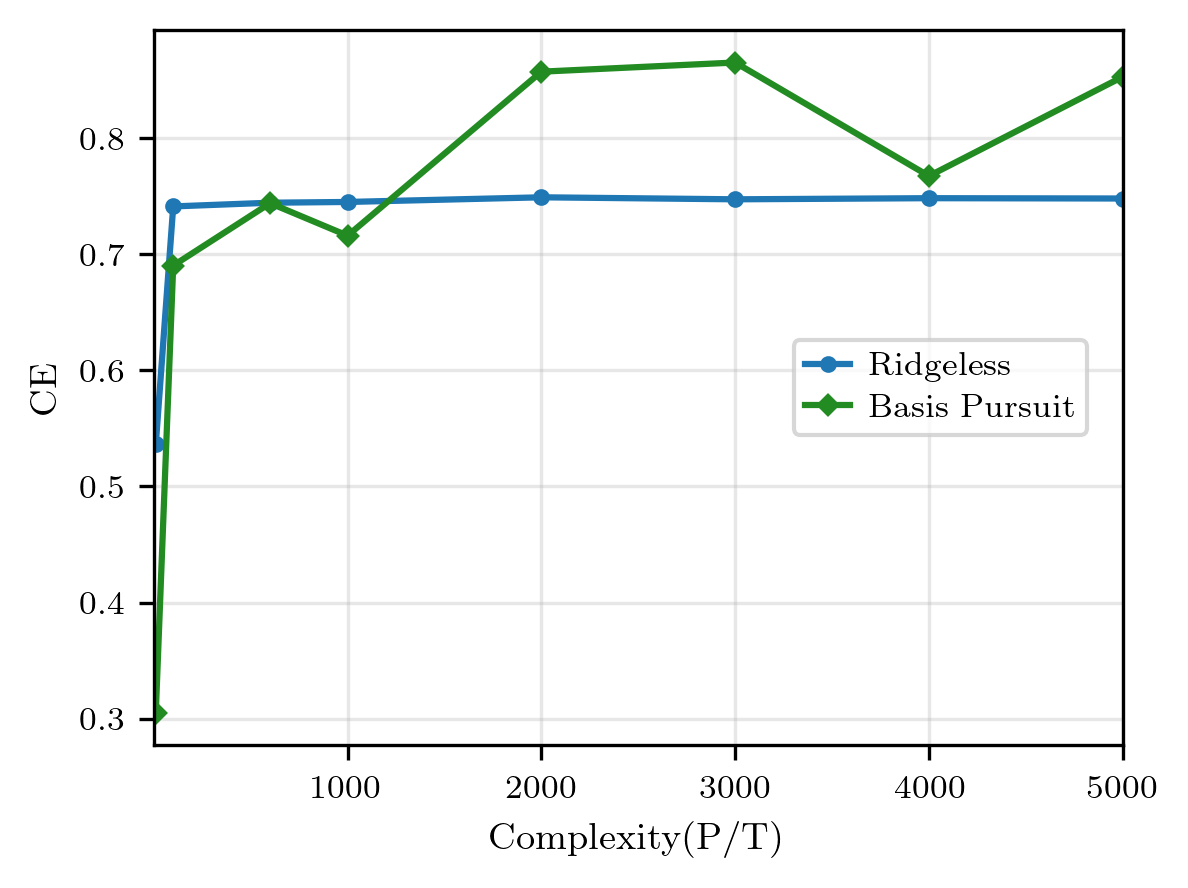}
      \caption{$\gamma = 5$}
      \label{fig:ce_gamma_c}
  \end{subfigure}
  
  \caption{Certainty-equivalent (CE) returns for CRRA investors at three levels of risk aversion, with \(\gamma\) increasing from left to right. In all panels, the sparse SDF (basis pursuit) delivers higher CE returns than the dense SDF (ridgeless) in the high-complexity regime, although the magnitude of the advantage narrows as \(\gamma\) increases.}
  \label{fig:ce_gamma_panel}
\end{figure}

\subsection{Sparsity in Practice: Feature Selection}

Our central claim is that basis pursuit selects a genuinely sparse pricing kernel from a very large candidate span. Figure~\ref{fig:nonzero_betas} provides direct evidence for this claim. Although the theory permits up to \(T=60\) active coefficients, the empirical support size is far smaller—about 31--33 across the entire high-complexity regime. Strikingly, this number remains highly stable, well below the theoretical ceiling of 60, even as complexity rises from \(c=500\) to \(c=5{,}000\). This pattern shows that basis pursuit is not merely enforcing a numerical cap on the number of selected factors; rather, as complexity increases, it continues to identify a similarly small and stable set of directions of progressively higher quality, thereby delivering steady improvements across all out-of-sample performance metrics.

\begin{figure}[H]
  \centering
  \includegraphics[width=0.75\textwidth]{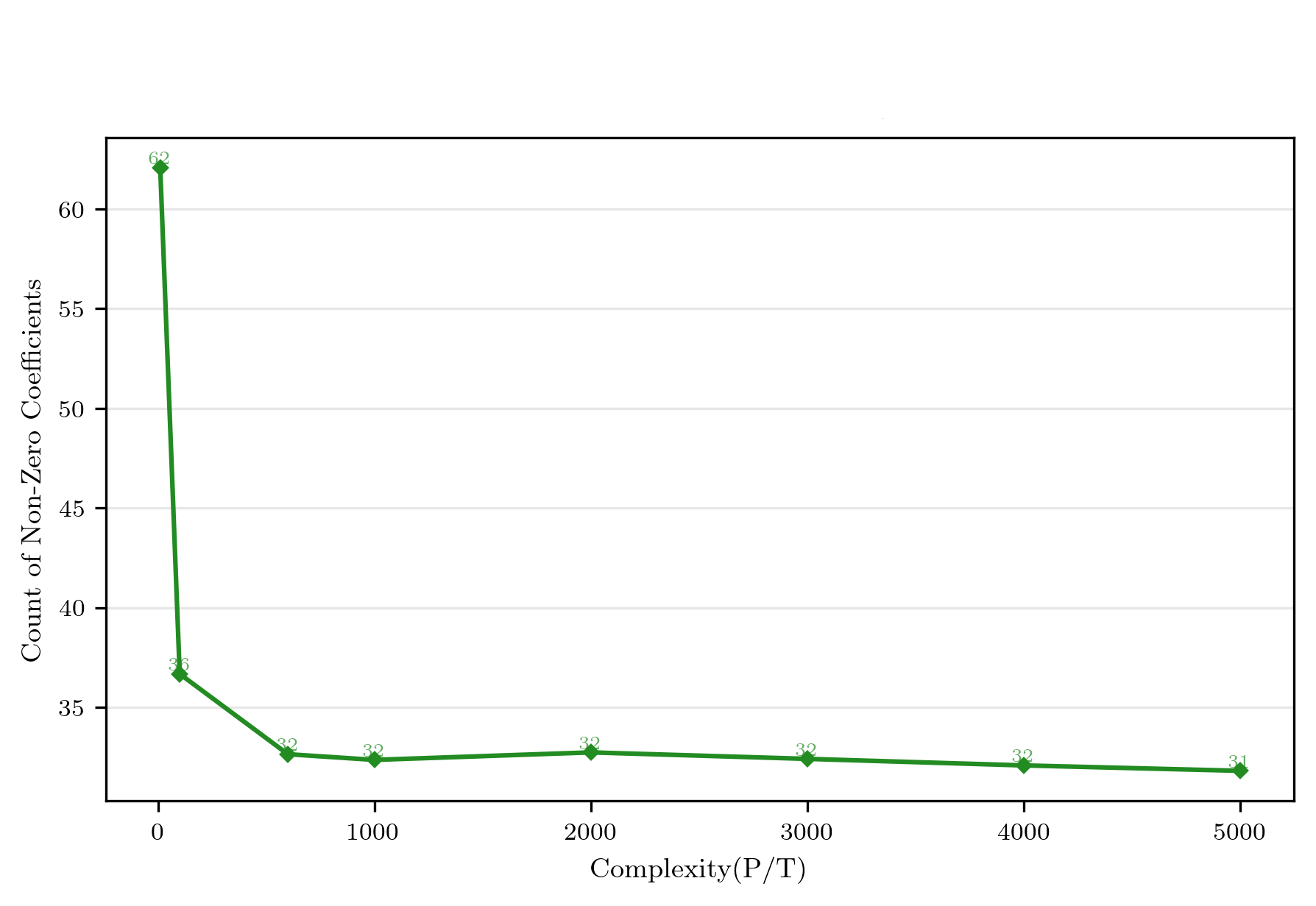}
 \caption{Number of nonzero coefficients selected by the sparse SDF as a function of complexity. Across all complexity levels, the sparse SDF (basis pursuit) selects approximately 31--33 active factors, well below the theoretical maximum of \(T=60\).}
  \label{fig:nonzero_betas}
\end{figure}

%
%


\section{Conclusion}

This paper studies how complexity and parsimony interact in empirical asset pricing. We argue that the apparent tension between them reflects a distinction between two different notions of sparsity: \emph{capacity sparsity}, which restricts the size of the candidate factor space, and \emph{factor sparsity}, which concerns the parsimonious structure of priced risks within that space. Our central point is that the economic value of complexity arises not because large factor spaces should be used densely, but because richer spaces create more scope for informative sparse selection.

Empirically, we show that when the information set is held fixed and model complexity is increased through richer nonlinear feature expansions, sparse and dense selection lead to sharply different outcomes. The sparse SDF selected by basis pursuit continues to improve in the high-complexity regime, whereas the dense ridgeless benchmark plateaus. The advantage of sparse selection is not confined to average returns: it also appears in the return distribution and in investor welfare, indicating that factor sparsity is an economically meaningful property of the pricing kernel rather than a purely statistical convenience.

Our analysis helps clarify why this pattern can arise. In overparameterized settings where many pricing kernels fit the data equally well in sample, performance depends not only on the richness of the factor space but also on the selection rule used to choose among admissible kernels. Sparse selection can outperform dense interpolation when it moves toward economically relevant directions that the minimum-\(\ell_2\) rule does not exploit, and that advantage can become larger as complexity increases.

Taken together, our results reframe the virtue-of-complexity debate. Complexity is valuable not because more factors are ultimately retained, but because a richer candidate space enlarges the set of directions from which sparse selection can recover economically meaningful signals.

\appendix
\section*{Appendix A.}

This appendix proves the results in Section~\ref{sec:sparse-vs-dense}. Throughout, let
\[
\mathcal I_P:=\{\lambda\in\mathbb R^P:F_P\lambda=\mathbf 1\},
\]
where \(F_P\in\mathbb R^{T\times P}\) has rank \(T<P\) and \(\mathcal I_P\neq\varnothing\).

\subsection*{A.1 Proof of Lemma~\ref{lem:bp_sparse0}}

\begin{proof}
Write
\[
\lambda=\lambda^+-\lambda^-,
\]
with \(\lambda^+,\lambda^-\in\mathbb R_+^P\). Then the basis-pursuit problem is equivalent to the linear program
\[
\min_{\lambda^+,\lambda^-\ge 0}\ \mathbf 1'(\lambda^++\lambda^-)
\quad\text{s.t.}\quad
F_P(\lambda^+-\lambda^-)=\mathbf 1.
\]
This linear program has \(T\) equality constraints. By the fundamental theorem of linear programming, there exists an optimal basic feasible solution. Any basic feasible solution has at most \(T\) basic variables among the \(2P\) nonnegative variables \((\lambda^+,\lambda^-)\). Therefore at most \(T\) indices can contribute a nonzero entry to
\[
\lambda=\lambda^+-\lambda^-.
\]
Hence there exists an optimal basis-pursuit solution with support size at most \(T\).
\end{proof}

\subsection*{A.2 Proof of Proposition~\ref{prop:gap_identity1}}

\begin{proof}
The ridgeless interpolator solves
\[
\min_{\lambda\in\mathbb R^P}\ \|\lambda\|_2
\quad\text{s.t.}\quad
F_P\lambda=\mathbf 1.
\]
The first-order conditions yield
\[
\hat\lambda_P^{RL}=F_P'(F_PF_P')^{-1}\mathbf 1.
\]
Since \(\hat\lambda_P^{RL}\) is a linear combination of the rows of \(F_P\), it belongs to \(\operatorname{row}(F_P)\).

Now let \(\lambda\in\mathcal I_P\). Since both \(\lambda\) and \(\hat\lambda_P^{RL}\) are feasible,
\[
F_P(\lambda-\hat\lambda_P^{RL})=\mathbf 0,
\]
so \(\lambda-\hat\lambda_P^{RL}\in\ker(F_P)\). Hence every feasible interpolator can be written as
\[
\lambda=\hat\lambda_P^{RL}+h,\qquad h\in\ker(F_P).
\]

By definition of the projection operators,
\[
\mu_P^\parallel=\Pi_P\mu_P,\qquad
\mu_P^\perp=M_P\mu_P,
\]
so
\[
\mu_P=\mu_P^\parallel+\mu_P^\perp,
\qquad
\mu_P^\parallel\in\operatorname{row}(F_P),\qquad
\mu_P^\perp\in\ker(F_P).
\]

Then
\[
\bar r_P(\lambda)-\bar r_P(\hat\lambda_P^{RL})
=
\mu_P'(\lambda-\hat\lambda_P^{RL}).
\]
Because \(\mu_P^\parallel\in\operatorname{row}(F_P)\) and \(\lambda-\hat\lambda_P^{RL}\in\ker(F_P)\), orthogonality implies
\[
{\mu_P^\parallel}'(\lambda-\hat\lambda_P^{RL})=0.
\]
Therefore
\[
\bar r_P(\lambda)-\bar r_P(\hat\lambda_P^{RL})
=
{\mu_P^\perp}'(\lambda-\hat\lambda_P^{RL}).
\]

Applying this to \(\lambda=\hat\lambda_P^{BP}\) gives
\[
\bar r_P(\hat\lambda_P^{BP})-\bar r_P(\hat\lambda_P^{RL})
=
{\mu_P^\perp}'h_P^{BP},
\]
where
\[
h_P^{BP}:=\hat\lambda_P^{BP}-\hat\lambda_P^{RL}\in\ker(F_P).
\]
The cosine representation follows from
\[
{\mu_P^\perp}'h_P^{BP}
=
\|\mu_P^\perp\|_2\,\|h_P^{BP}\|_2\cos\theta_P.
\]

The second part is immediate from \eqref{eq:mean_gap_identity}.

\end{proof}

\section*{Appendix B. Low-\(c\) virtue-of-complexity curves}
\label{app:low_c}

This appendix reports the low-complexity counterparts of the main-text figures, restricted to \(c<10\) so that behavior in the underparameterized and near-interpolation regimes is visible.

\begin{figure}[H]
  \centering
  \begin{subfigure}{0.48\textwidth}
      \centering
      \includegraphics[width=\linewidth]{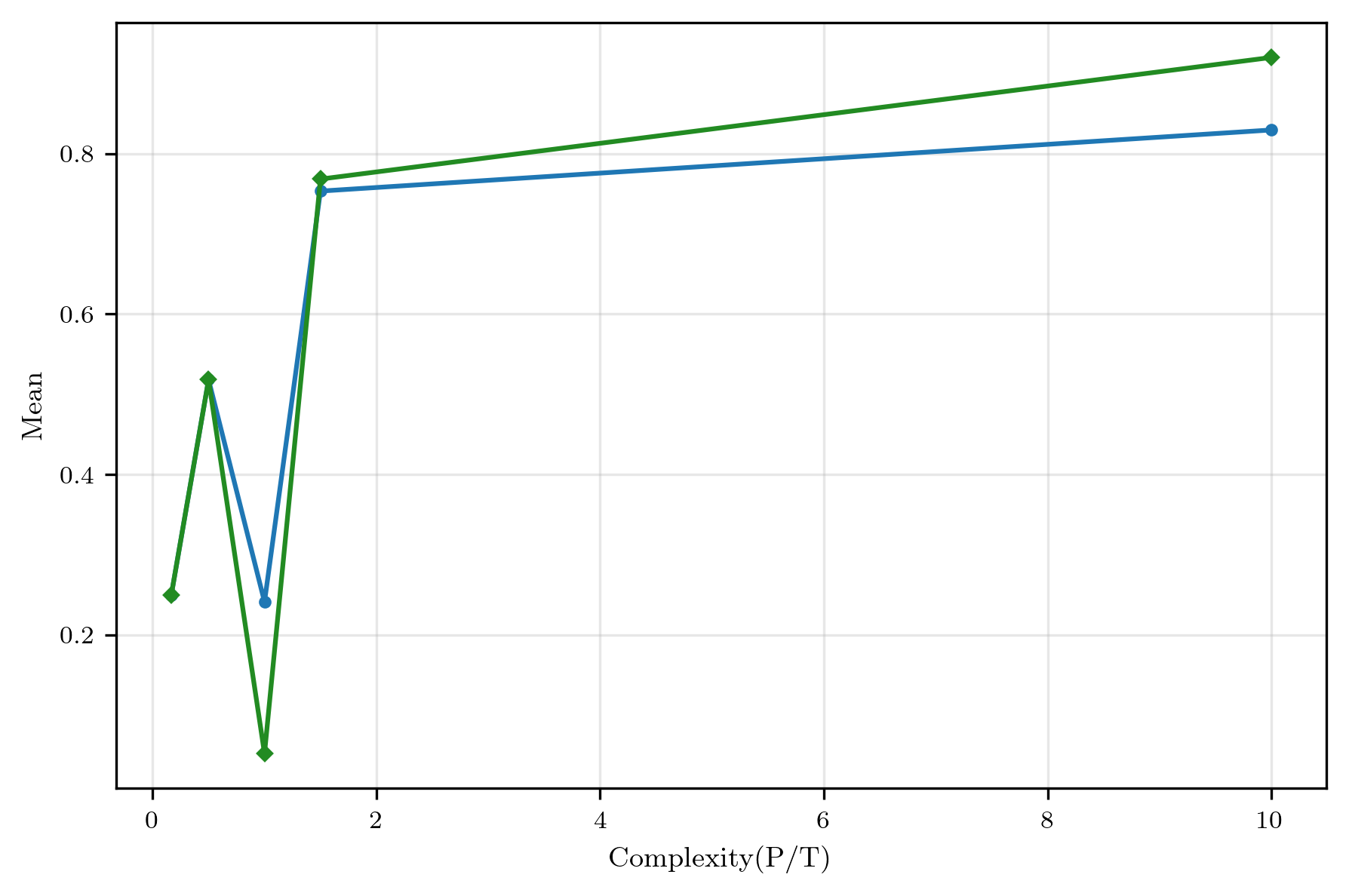}
      \caption{Mean returns}
      \label{fig:voc_main_a_low_c}
  \end{subfigure}
  \hfill
  \begin{subfigure}{0.48\textwidth}
      \centering
      \includegraphics[width=\linewidth]{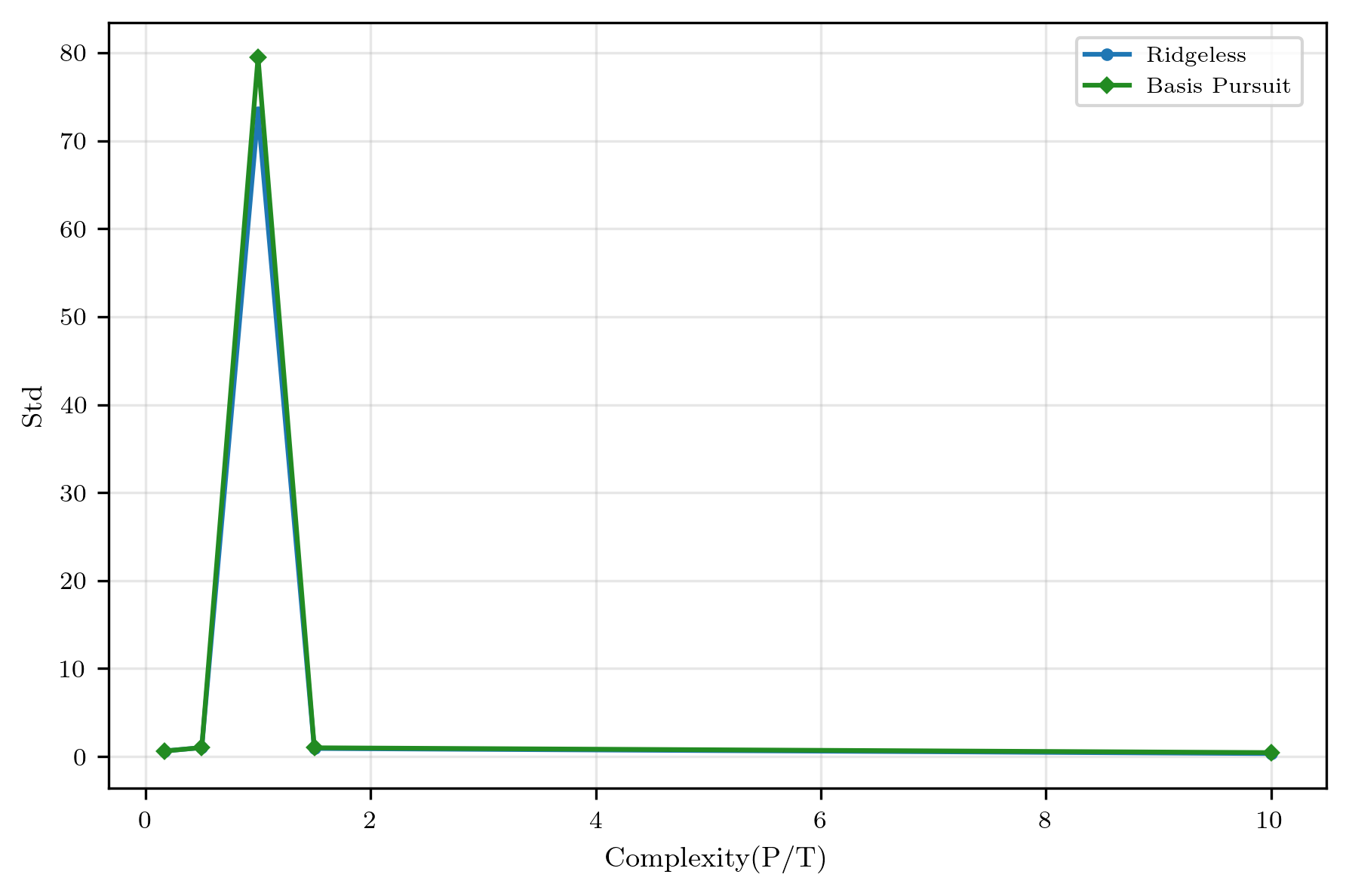}
      \caption{Volatility}
      \label{fig:voc_main_b_low_c}
  \end{subfigure}

  \medskip  

  \begin{subfigure}{0.48\textwidth}
      \centering
      \includegraphics[width=\linewidth]{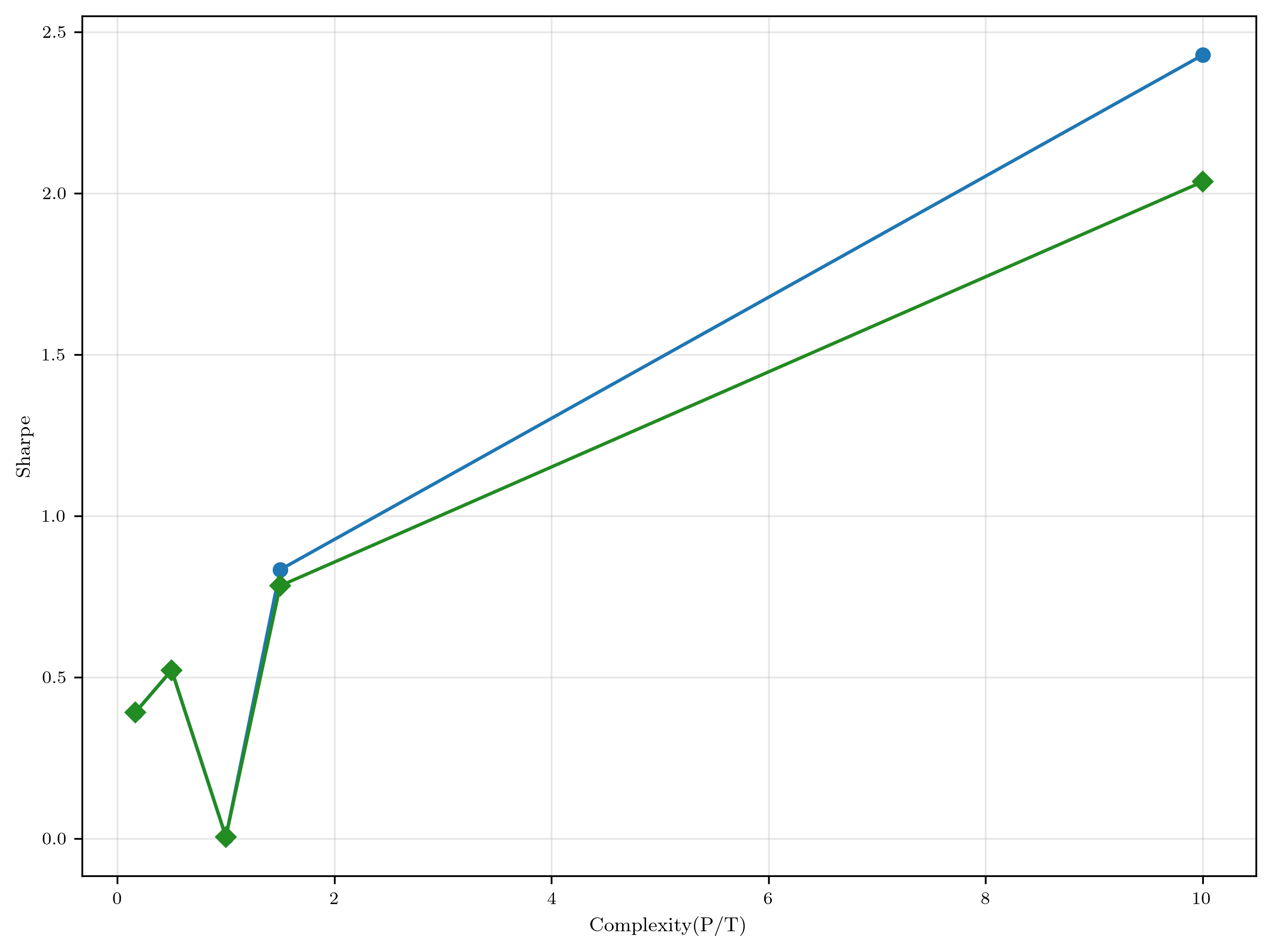}
      \caption{Sharpe ratio}
      \label{fig:voc_main_c_low_c}
  \end{subfigure}
  \hfill
  \begin{subfigure}{0.48\textwidth}
      \centering
      \includegraphics[width=\linewidth]{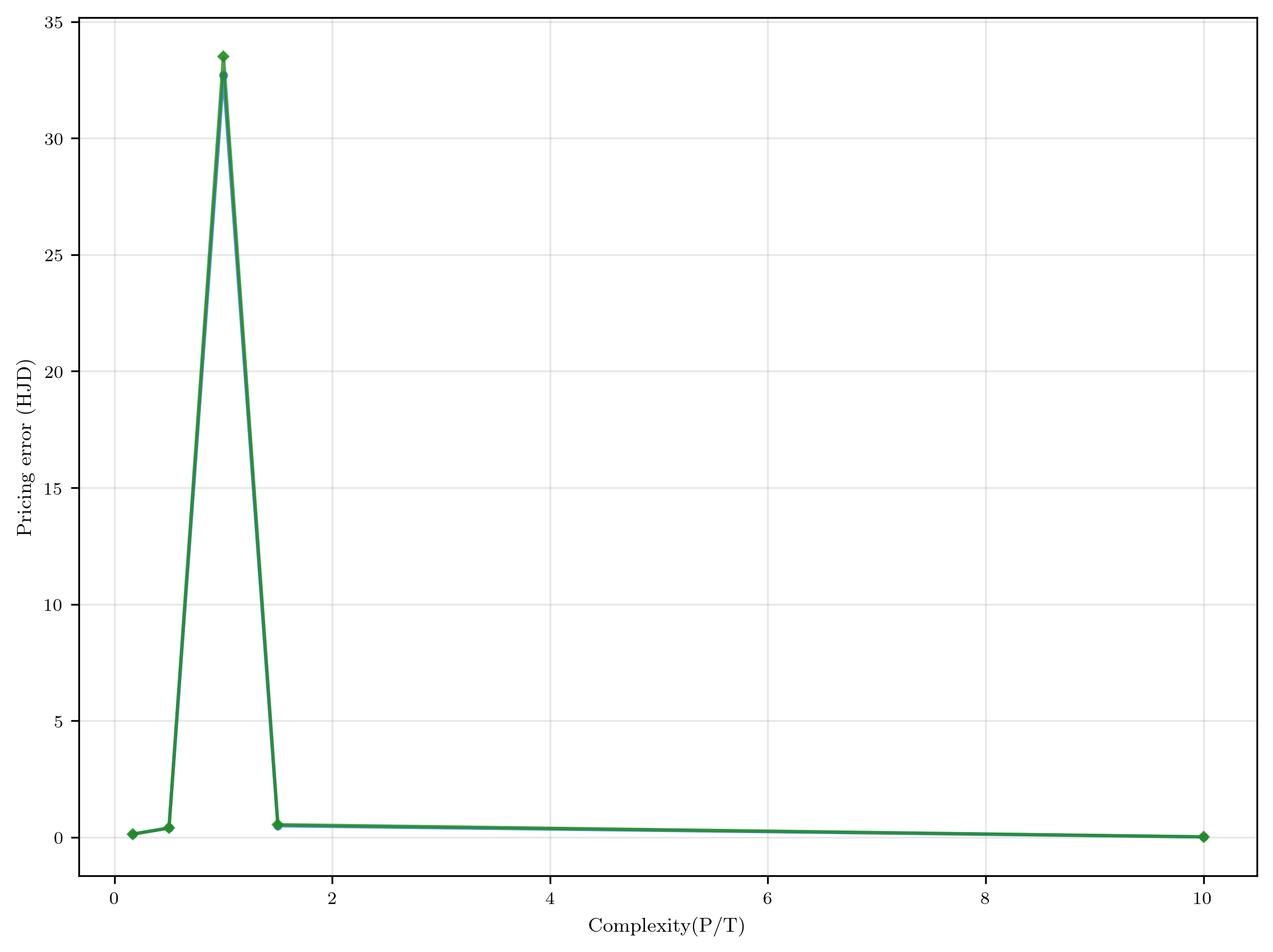}
      \caption{Pricing Error}
      \label{fig:voc_main_d_low_c}
  \end{subfigure}

  \caption{Low-complexity counterpart of Figure~\ref{fig:voc_main_panel}, restricted to \(c<10\).}
  \label{fig:voc_main_panel_low_c}
\end{figure}

\begin{figure}[H]
  \centering
  \begin{subfigure}{0.48\textwidth}
      \centering
      \includegraphics[width=\linewidth,height=0.8\textheight,keepaspectratio]{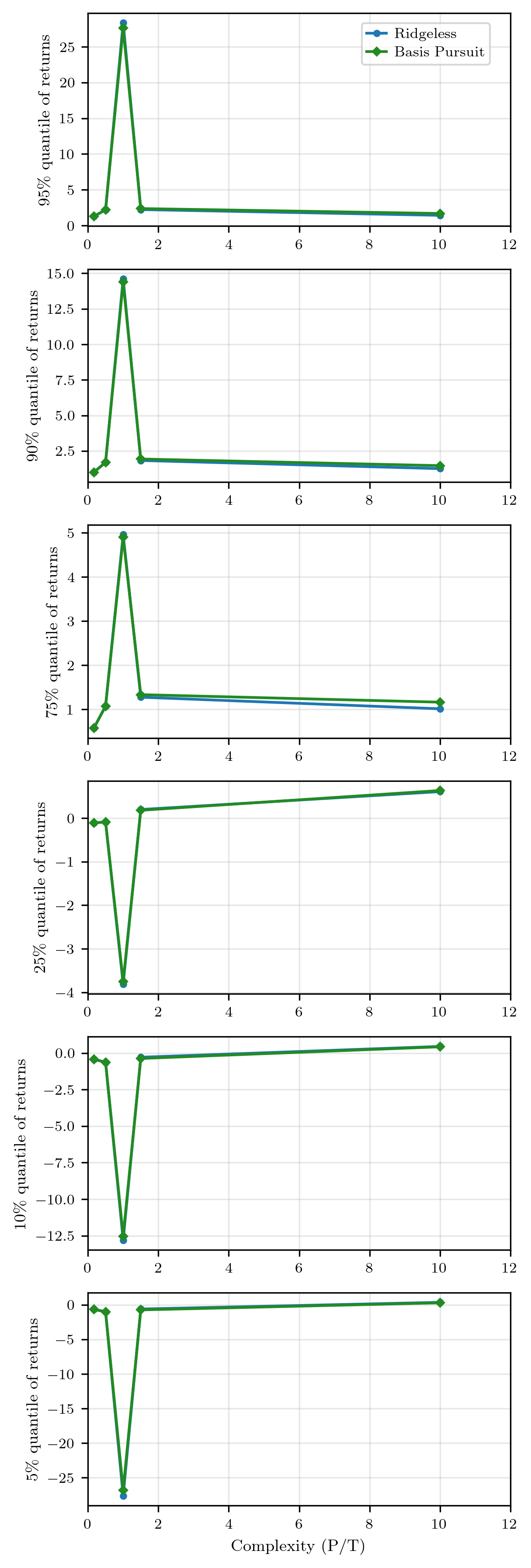}
      \caption{Return quantiles across complexity levels}
      \label{fig:quantiles_panel_low_c}
  \end{subfigure}
  \hfill
  \begin{subfigure}{0.48\textwidth}
      \centering
      \includegraphics[width=\linewidth,height=0.8\textheight,keepaspectratio]{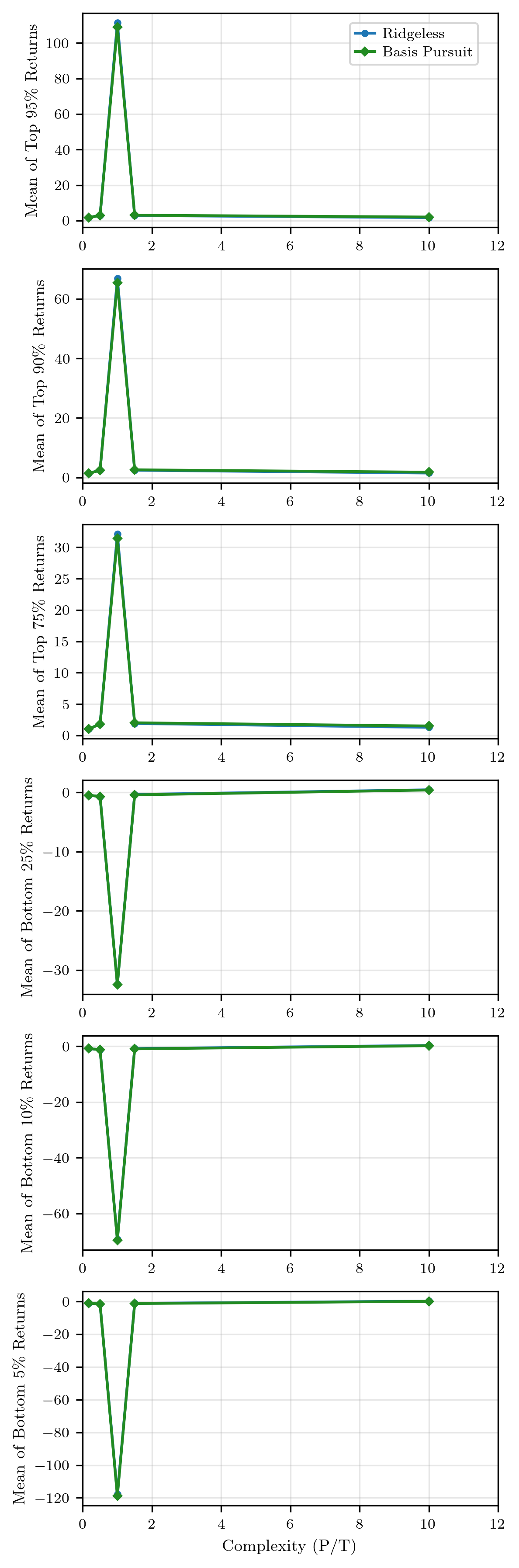}
      \caption{Expected Shortfall (ES) and upper-tail mean across complexity levels}
      \label{fig:cvar_panel_low_c}
  \end{subfigure}
  
  \caption{Low-complexity counterpart of Figure~\ref{distribution-dominance}, restricted to \(c<10\).}
\label{fig:distribution_dominance_low_c}
\end{figure}

\begin{figure}[H]
  \centering
  \begin{subfigure}{0.32\textwidth}
      \centering
      \includegraphics[width=\linewidth]{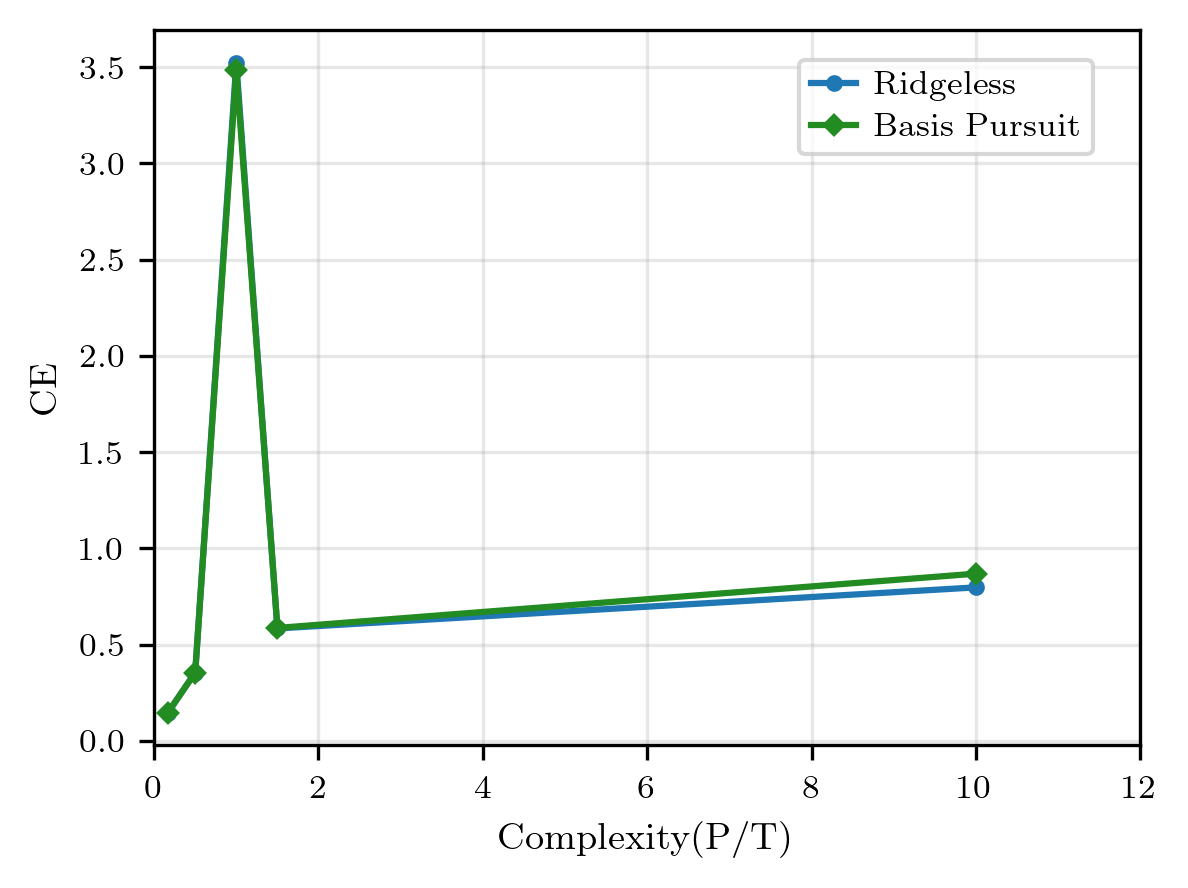}
      \caption{$\gamma = 1$}
      \label{fig:ce_gamma_a_low_c}
  \end{subfigure}
  \hfill
  \begin{subfigure}{0.32\textwidth}
      \centering
      \includegraphics[width=\linewidth]{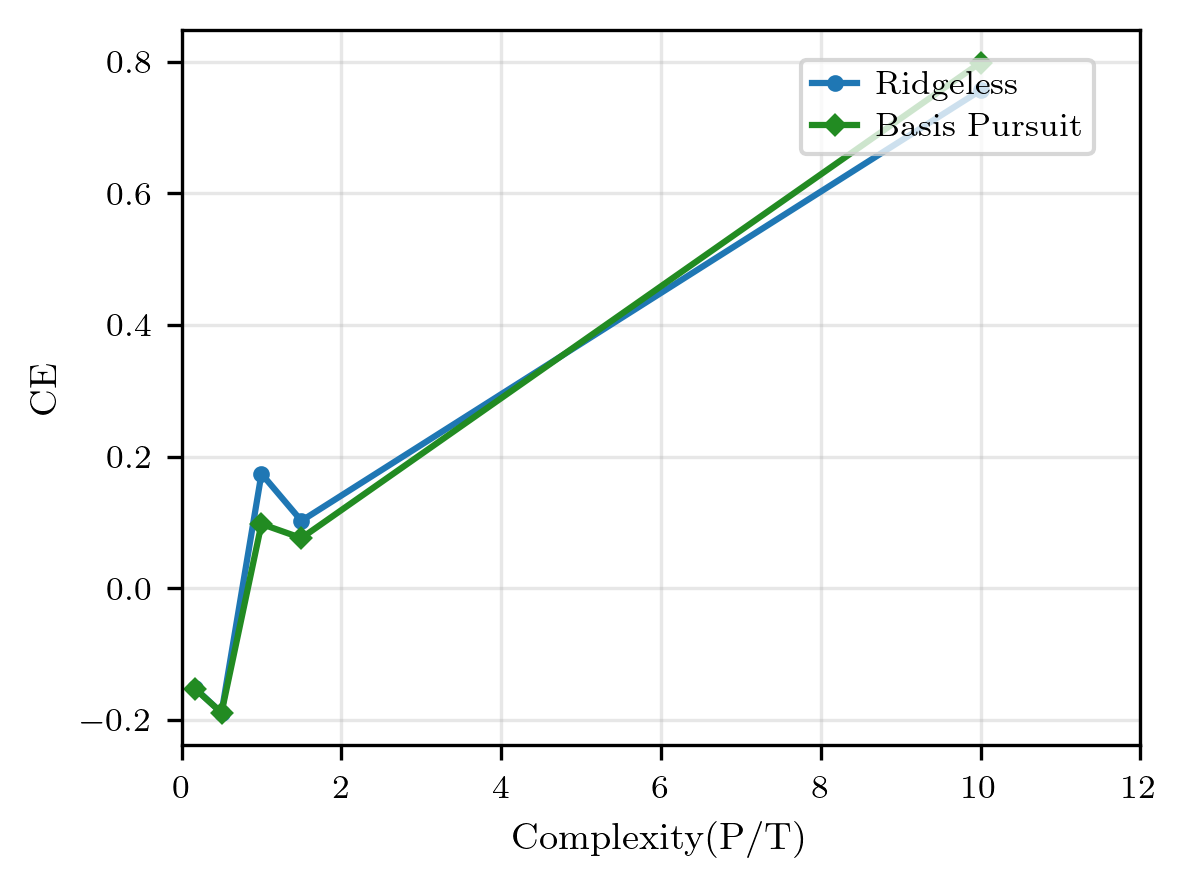}
      \caption{$\gamma = 2$}
      \label{fig:ce_gamma_b_low_c}
  \end{subfigure}
  \hfill
  \begin{subfigure}{0.32\textwidth}
      \centering
      \includegraphics[width=\linewidth]{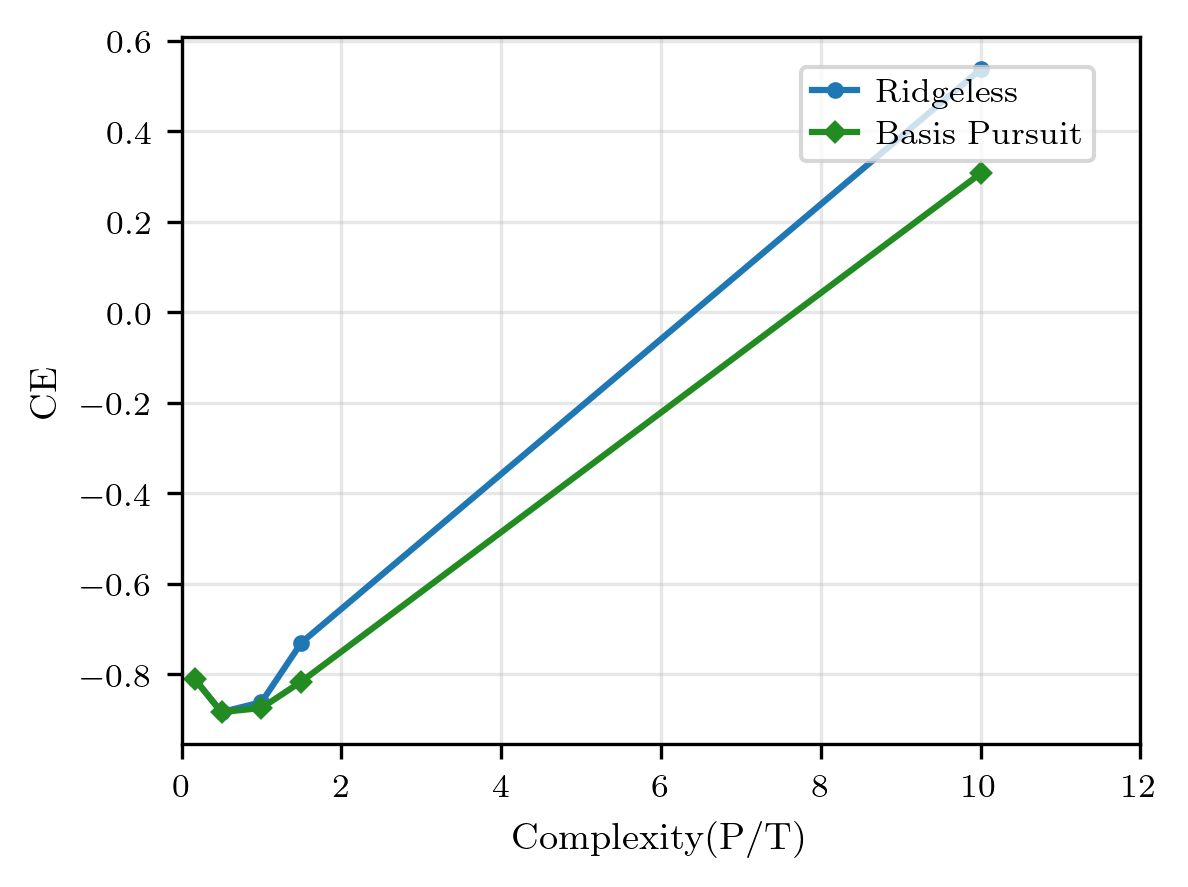}
      \caption{$\gamma = 5$}
      \label{fig:ce_gamma_c_low_c}
  \end{subfigure}
  
  \caption{Low-complexity counterpart of Figure~\ref{fig:ce_gamma_panel}, restricted to \(c<10\).}
  \label{fig:ce_gamma_panel_low_c}
\end{figure}

\bibliography{references}

\end{document}